\title{Optimal Message-Passing with Noisy Beeps}
\author{Peter Davies}
\affil{Durham University}
\newtheorem{theorem}{Theorem}
\newtheorem{corollary}[theorem]{Corollary}
\newtheorem{lemma}[theorem]{Lemma}
\newtheorem{definition}[theorem]{Definition}
\newtheorem{notation}[theorem]{Notation}
\newcommand{\Prob}[1]{\mathbf{Pr}\left[#1\right]}
\newcommand{\Exp}[1]{\mathbf{E}\left[#1\right]}
\newcommand{\nat}{\ensuremath{\mathbb{N}}}
\newcommand{\congest}{\textsf{CONGEST}\xspace}
\newcommand{\bcongest}{\textsf{Broadcast CONGEST}\xspace}
\def\epsilon{\ensuremath{\varepsilon }}
\newcommand{\eps}{\ensuremath{\epsilon }}
\newcommand{\COMMENTED}[1]{{}}
\newcommand{\hide}[1]{\COMMENTED{#1}}
\date{}
\begin{document}
\maketitle

\begin{abstract}
Beeping models are models for networks of weak devices, such as sensor networks or biological networks. In these networks, nodes are allowed to communicate only via emitting beeps: unary pulses of energy. Listening nodes only the capability of {\it carrier sensing}: they can only distinguish between the presence or absence of a beep, but receive no other information. The noisy beeping model further assumes listening nodes may be disrupted by random noise.

Despite this extremely restrictive communication model, it transpires that complex distributed tasks can still be performed by such networks. In this paper we provide an optimal procedure for simulating general message passing in the beeping and noisy beeping models. We show that a round of \textsf{Broadcast CONGEST} can be simulated in $O(\Delta\log n)$ round of the noisy (or noiseless) beeping model, and a round of \textsf{CONGEST} can be simulated in $O(\Delta^2\log n)$ rounds (where $\Delta$ is the maximum degree of the network). We also prove lower bounds demonstrating that no simulation can use asymptotically fewer rounds.

This allows a host of graph algorithms to be efficiently implemented in beeping models. As an example, we present an $O(\log n)$-round \textsf{Broadcast CONGEST} algorithm for maximal matching, which, when simulated using our method, immediately implies a near-optimal $O(\Delta \log^2 n)$-round maximal matching algorithm in the noisy beeping model.

\end{abstract}

\section{Introduction}
Beeping models were first introduced by Cornejo and Kuhn \cite{CK10} to model wireless networks of weak devices, such as sensor networks and biological networks \cite{Science}. These models are characterised by their very weak assumptions of communication capabilities: devices are assumed to communicate only via \emph{carrier sensing}. That is, they have the ability to distinguish between the presence or absence of a signal, but not to gain any more information from the signal. 

\subsection{Models}
The models we study all have the same basic structure: a network of devices is modeled as a graph with $n$ nodes (representing the devices) and maximum degree $\Delta$, where edges represent direct reachability between pairs of devices. We will assume that all nodes activate simultaneously, and therefore have shared global clock (some prior work on beeping models instead allow nodes to activate asynchronously). Time then proceeds in synchronous rounds, in which nodes can perform some local computation and then can communicate with neighboring devices. The defining characteristic of each model is the communication capability of the nodes.

\paragraph{Noiseless Beeping Model}
In each round, each node chooses to either beep or listen. Listening nodes then hear a beep iff at least one of their neighbors beeped, and silence otherwise. Nodes do not receive any other information about the number or identities of their beeping neighbors.

\paragraph{Noisy Beeping Model}
The noisy beeping model, introduced by Ashkenazi, Gelles, and Leshem \cite{AGL20}, is similar to the noiseless version, except that the signal each listening node hears (beep or silence) is \emph{flipped}, independently uniformly at random, with some probability $\eps\in(0,\frac 12)$.

Within these beeping models, our aim will be to simulate more powerful message-passing models, in which nodes have the ability to send longer messages to each other, and these messages are received without interference:

\paragraph{\bcongest Model}
In rounds of the \bcongest model, nodes may send the same $O(\log n)$-bit message to each of their neighboring nodes, and each node hears the messages from all of its neighbors.

\paragraph{\congest Model}
The \congest model is similar to \bcongest, but allows nodes to send (potentially) different $O(\log n)$-bit messages to each of their neighboring nodes. Again, each node hears the messages from all of its neighbors.

The communication capabilities in the \bcongest and \congest models are clearly much more powerful than that of either beeping model, and \congest in particular has a broad literature of efficient algorithms. Our aim in this work is to provide an efficient generic simulation of \bcongest and \congest in the beeping models, so that these existing algorithms can be applied out-of-the-box to networks of weak devices.

\subsection{Prior work}

\paragraph{Beeping models}
The (noiseless) beeping model was introduced by Cornejo and Kuhn \cite{CK10}, who also gave results for an interval coloring task used for synchronization. Classical local graph problems have been studied in the model, with Afek et al. \cite{AABCHK13} giving an $O(\log^2 n)$-round maximal independent set algorithm, and Beauqier et al. \cite{BBDK18} giving $O(\Delta^2 \log n + \Delta^3)$-round deterministic algorithms for maximal independent set and $(\Delta+1)$-coloring. 

Global communication problems (those requiring coordination across the entire network, and therefore with running times parameterized by the diameter $D$ of the network) have also been studied. Single-source broadcast of a $b$-bit message can be performed in $O(D+b)$ rounds using the simple tool of `beep waves', introduced by Ghaffari and Haeupler \cite{GH13} and formalized by Czumaj and Davies \cite{CD19}. Leader election, another fundamental global problem, has seen significant study in the model.  Ghaffari and Haeupler \cite{GH13} gave a randomized algorithm requiring $O(D+\log n \log \log n)\cdot \min\{\log\log n, \log\frac nD\}$, while F{\"o}rster, Seidel and Wattenhofer \cite{FSW14} gave an $O(D \log n)$-round \emph{deterministic} algorithm. Czumaj and Davies \cite{CDleader} gave a simple randomized algorithm with $O(D \log n)$ worst-case round complexity but $O(D+\log n)$ expected complexity. Finally Dufoulon, Burman and Beauquier \cite{DBB18} settled the complexity of the problem with a deterministic algorithm with optimal $O(D+\log n)$ round complexity.

On other global problems, Czumaj and Davies \cite{CD19} and Beauqier et al. \cite{BBDD19} gave results for broadcasting from multiple sources, and Dufoulon, Burman and Beauqier \cite{DBB20} study synchronization primitives for the model variant where nodes activate asynchronously. 

\paragraph{Message passing models}

Message passing models, and \congest in particular, have seen a long history of study and have a rich literature of algorithms for problems including (among many others) local problems such as $\Delta+1$-coloring\cite{HKMT21}, global problems such as minimum spanning tree\cite{KKT15}, and approximation problems such as approximate maximum matching\cite{AKO18}. \bcongest is less well-studied, though some dedicated algorithms have also been developed for it, e.g. \cite{HP21}. There is an obvious way to simulate \congest algorithms in \bcongest at an $O(\Delta)$-factor overhead: nodes simply broadcast the messages for each of their neighbors in turn, appending the ID of the intended recipient. In general this is the best that can be done (as can be seen from our bounds on simulating beeping models), but for specific problems this $\Theta(\Delta)$ complexity gap is often not necessary.

\paragraph{Simulating message passing with beeps}
Two works have previously addressed the task of simulating messaging passing in beeping models. The first was by Beauquier et al. \cite{BBDK18}, and gave a generic simulation for \congest in the noiseless beeping model. Their algorithm required $\Delta^6$ setup rounds, and then $\Delta^4\log n$ beep-model rounds per round of \congest. This result was improved by Ashkenazi, Gelles, and Leshem \cite{AGL20}, who introduced the noisy beeping model, and gave an improved simulation of \congest which requires $O(\Delta^4\log n)$ rounds of setup, and then simulates each \congest round in $O(\Delta\log n \cdot \min\{n,\Delta^2\})$ rounds of noisy beeps. 

\subsection{Our results}
We give a randomized simulation of \bcongest which requires $O(\Delta\log n)$ rounds in the noisy beep model per round of \bcongest, with no additional setup cost. We will call this per-round cost the \emph{overhead} of simulation. This implies a simulation of \congest with $O(\Delta^2\log n)$ overhead in the noisy beep model. We therefore improve over the previous best result of \cite{AGL20} by reducing the overhead by a $\Theta(\min\{\frac{n}{\Delta},\Delta\})$ factor, and removing the large setup cost entirely. We prove that these bounds are tight for both \bcongest and \congest by giving matching lower bounds (even for the noiseless beeping model). This has the potentially surprising implication that introducing noise into the beeping model does not asymptotically increase the complexity of message-passing simulation at all.

This simulation result allows many \congest and \bcongest algorithms to be efficiently implemented with beeps. As an example, we show an $O(\log n)$-round \bcongest algorithm for the task of maximal matching, which via our simulation implies an $O(\Delta\log^2 n)$-round algorithm in the noisy beeping model. We show that this is almost optimal by demonstrating an $\Omega(\Delta \log n)$ lower bound (even in the noiseless model).

\subsection{Our Approach}
We summarize our approach to simulating \congest in the noiseless beeping model (the noisy case will follow naturally, as we will see later). First, let us mention the general approach of the previous results of \cite{BBDK18} and \cite{AGL20}: there, the authors use a coloring of $G^2$ (i.e., a coloring such that no nodes within distance $2$ in $G$ receive the same color) to sequence transmissions. They iterate through the color classes, with nodes in each class transmitting their message (over a series of rounds, with a beep or silence representing each bit of the message). Since nodes have at most one neighbor in each color class, they hear that neighbor's message undisrupted.

The disadvantage of such an approach is that the coloring of $G^2$ requires a large setup time to compute, and also necessitates at least $\min\{n,\Delta^2\}$ color classes. This is the cause of the larger overhead in the simulation result of \cite{AGL20}. 

Instead of having nodes transmitting at different times, our solution is to have them all transmit at once, and use superimposed codes to ensure that the messages are decipherable. The definition of a classic superimposed code is as follows:

\begin{definition}[Superimposed Codes]
An $(a,k)$-superimposed code of length $b$ is a function $C:\{0,1\}^a\rightarrow\{0,1\}^b$ such that any superimposition (bitwise OR) of at most $k$ codewords is unique.
\end{definition}

The connection between superimposed codes and beeping networks is that, if some subset of a node $v$'s neighbors all transmit a message simultaneously (using beeps to represent $\textbf 1$s and silence to represent $\textbf 0$s), then $v$ (if it were to listen every round) would hear the bitwise OR superimposition of all the messages. If this superimposition is unique, then $v$ is able to identify the set of messages that were transmitted (and this set contain precisely those messages with no $\textbf 1$ in a position where the superimposition has $\textbf 0$).

Superimposed codes of this form were first introduced by Kautz and Singleton \cite{KS64}, who showed a construction with $b=O(k^2  a)$. This definition is equivalent to cover-free families of sets, which is the terminology used in much of the prior work. A lower bound $b=\Omega(\frac{k^2 a}{\log k} )$ was found by D'yachkov and Rykov \cite{DR82}, with a combinatorial proof later given by Ruszink{\'o} \cite{R94}, and another, simple proof given by F{\"u}redi \cite{Furedi96}. The $\log k$ gap between upper and lower bounds remains open.

This presents a problem to applying such codes for message passing in the beep model. If all nodes are transmitting their message (of $O(\log n)$ bits) at once, then we would need to use an $(O(\log n),\Delta)$-superimposed code for the messages to be decodable. Using Kautz and Singleton's construction \cite{KS64} results in a length of $O(\Delta^2\log n)$ (and length corresponds directly to rounds in the beeping model). This would result in the same $O(\Delta^2)$-factor overhead as from using a coloring of $G^2$, so would not improve over \cite{AGL20}. Furthermore, even if we were to find improved superimposed codes, the lower bound implies that any such improvement would be only minor.

To achieve codes with better length, we weaken the condition we require. Rather than requiring that all superimpositions of at most $k$ codewords are unique, we only require that \emph{most} are. Specifically, if the $k$ codewords are chosen at random, then their superimposition will be unique (and hence decodable) with high probability. We show the existence of short codes with this weakened property. Constructions with similar properties (though not quite suitable for our uses) were also given in \cite{DVPS17}.

This raises a new problem: using these shorter codes, we can efficiently have all nodes send a \emph{random} message to their neighbors, but how does this help us send a specific message?

Our answer is that if we repeat the transmission (using the same random codewords for each node), then every node $v$ already knows exactly when its neighbors should be beeping\footnote{Technically, $v$ does not know which neighbor corresponds to which codeword, but this is not required by our approach.}, and in particular, $v$ knows when a neighbor $u$ should be beeping \emph{alone} (i.e., not at the same time as any other neighbor of $v$). If $u$ now beeps only in a \emph{subset} of the rounds indicated by its codeword, then it can pass information to $v$ in this way. So, our final algorithm uses a secondary \emph{distance} code to specify what this subset should be in order to ensure that all neighbors of $u$ can determine $u$'s message. The aim of this distance code is that codewords are sufficiently large Hamming distance apart that $u$'s neighbors can determine $u$'s message, even though they only hear a subset of the relevant bits, and these bits can be flipped by noise in the noisy model.

\subsection{Notation}
Our protocols will be heavily based on particular types of binary codes, which we will communicate in the beeping model via beeps and silence. In a particular round, in the noiseless beeping model, we will say that a node $v$ receives a $\mathbf{1}$ if it either listens and hears a beep, or beeps itself. We will say that $v$ receives a $\mathbf{0}$ otherwise. In the noisy model, what $v$ hears will be this bit, flipped with probability $\eps$.

We will use logic operators to denote operations between two strings: for $s,s'\in \{0,1\}^a$, $s\land s'\in\{0,1\}^a$ is the logical \textsc{And} of the two strings, with  $\mathbf{1}$ in each coordinate iff both $s$ and $s'$ had $\mathbf{1}$ in that coordinate. Similarly, $s\lor s'\in\{0,1\}^a$ is the logical \textsc{Or} of the two strings, with  $\mathbf{1}$ in each coordinate iff $s$ or $s'$ (or both) had $\mathbf{1}$ in that coordinate. 

\begin{definition}
We will use $\mathbf 1(s)$ to denote the number of $\mathbf 1$s in a string $s\in \{0,1\}^a$. We will say that a string $s\in \{0,1\}^a$ $d$-intersects another string $s'\in \{0,1\}^a$ if $\mathbf{1}(s\land s')\ge d$. 
\end{definition}

For a set of strings $S\in \{0,1\}^a$, we will use $\vee(S)$ as shorthand for the superimposition $\bigvee_{s\in S}s$.
\section{Binary Codes}

The novel type of superimposed code on which our algorithm is mainly based is defined as follows:
 
\begin{definition}
An $(a,k,\delta)$-beep code of length $b$ is a function $C:\{0,1\}^a\rightarrow\{0,1\}^b$ such that:

\begin{itemize}
	\item all $s\in C$ have $\mathbf{1}(s)=\frac{\delta b}{k}$.
	\item the number of size-$k$ subsets $S\subseteq C$ whose superimpositions $\vee(S)$ $\frac{5\delta^2 b}{k}$-intersect some $s\in C\setminus S$ is at most $\binom{2^a}{k}2^{-2a}$
\end{itemize}

(here we slightly abuse notation by using $C$ to denote the set of codewords, i.e. the image $C(\{0,1\}^a)$ of the beep code function).
\end{definition}

In other words, all codewords have exactly $\frac{\delta b}{k}$ $\mathbf{1}$s, and only a $2^{-2a}$-fraction of the $\binom{2^a}{k}$ size-$k$ subsets of codewords have a superimposition that $\frac{5\delta^2 b}{k}$-intersects some other codeword. This first criterion is only a technicality to aid our subsequent application; the important point is the second, which will imply that a superimposition of $k$ \emph{random} codewords will, with probability at least $1-2^{-2a}$, be decodable (even under noise, since avoiding $\frac{5\delta^2 b}{k}$-intersection will provide use with sufficient redundancy to be robust to noise). Note that for such a code to exist, $\frac{\delta b}{k}$ must be an integer, which we will guarantee in our construction.

\begin{theorem}\label{thm:beepcode}
	For any any $a,k,c\in \nat$, there exists an $(a,k,1/c)$-beep code of length $b=c^2 ka$.
\end{theorem}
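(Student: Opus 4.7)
The plan is to construct the code by the probabilistic method: sample $2^a$ codewords independently and uniformly at random from all length-$b$ binary strings with exactly $\frac{\delta b}{k} = ca$ ones (substituting $\delta = 1/c$ and $b = c^2 ka$). This makes the first condition of the definition automatic, and the intersection threshold $\frac{5\delta^2 b}{k}$ simplifies to exactly $5a$. It then suffices to show that the expected number of size-$k$ subsets $S$ whose superimposition $\vee(S)$ satisfies $\mathbf{1}(\vee(S) \land s) \geq 5a$ for some external $s$ is at most $\binom{2^a}{k} 2^{-2a}$, after which the probabilistic method yields a valid code.

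For the per-pair bound, I would fix a $k$-subset $S = \{c_1, \ldots, c_k\}$ and some external codeword $s$, and study $\mathbf{1}(\vee(S) \land s)$. This quantity is at most $\sum_{\ell=1}^{k} Y_\ell$, where $Y_\ell := \mathbf{1}(c_\ell \land s)$ counts the positions of $s$ covered by $c_\ell$; equality fails only if the $c_\ell$'s double-cover some position of $s$, which only makes the bound easier. The $Y_\ell$ are independent hypergeometric random variables, each drawing $ca$ positions out of $b = c^2 ka$ in search of the $ca$ ones of $s$, so $\Exp{Y_\ell} = (ca)^2/b = a/k$ and $\Exp{\sum_\ell Y_\ell} = a$. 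Using Hoeffding's classical comparison (each hypergeometric moment generating function is dominated by that of the matching binomial), I can apply the standard multiplicative Chernoff bound with $\mu = a$ and deviation factor $1 + \delta = 5$ to obtain $\Prob{\sum_\ell Y_\ell \geq 5a} \leq \left(e^4/5^5\right)^a < 2^{-5a}$.

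A union bound over the at most $2^a$ possible external $s$ then shows any fixed $S$ is bad with probability at most $2^a \cdot 2^{-5a} = 2^{-4a}$, so by linearity of expectation the expected number of bad subsets is at most $\binom{2^a}{k} 2^{-4a} \leq \binom{2^a}{k} 2^{-2a}$, and hence some realization of the sampling attains the required bound. The main (and really only) technical obstacle is justifying Chernoff-type concentration for a sum of hypergeometrics, but this is immediate from Hoeffding's MGF comparison together with independence of the $Y_\ell$; everything else is linearity of expectation and the standard probabilistic method. A minor side issue is that $C$ must be a function on $\{0,1\}^a$, so the $2^a$ sampled codewords should be distinct; this holds with overwhelming probability for any nontrivial parameters, and can be enforced exactly by sampling without replacement with only a negligible quantitative change in the estimates above.
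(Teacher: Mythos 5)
Your proof is correct, and it follows the same overall skeleton as the paper: the identical random construction (each codeword uniform over weight-$ca$ strings of length $b=c^2ka$, so the weight condition and the threshold $5a$ come out the same), a per-pair failure bound of $2^{-\Theta(a)}$, a union bound over the external codeword, and linearity of expectation plus the probabilistic method to get at most $\binom{2^a}{k}2^{-2a}$ bad subsets. The one place you genuinely diverge is the per-pair estimate: the paper fixes the realization of $S$ (so $\mathbf 1(\vee(S))\le b/c$) and uses the randomness of the external codeword, bounding the probability that $5a$ of its $ca$ ones land inside $\vee(S)$ by a direct count $\binom{ca}{5a}c^{-5a}\le 2^{-4a}$; you instead fix $s$ and use the randomness of the $k$ codewords in $S$, dominating $\mathbf 1(\vee(S)\land s)$ by a sum of independent hypergeometrics with total mean $a$ and invoking a multiplicative Chernoff bound justified through Hoeffding's without-replacement MGF comparison, getting $2^{-5a}$. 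Both are valid; the paper's computation is more elementary (no concentration machinery), while yours avoids the slightly delicate conditioning claim in the paper that each $\mathbf 1$ of the external codeword hits $\vee(S)$ with probability at most $1/c$ ``independently'' of the others, at the cost of citing the hypergeometric-vs-binomial domination. Your closing remark about enforcing distinct codewords is a reasonable tidiness point that the paper's argument simply glosses over, and it does not affect either proof. Note that your argument establishes existence only, which is all the theorem asks; the paper's extra Markov step is there to get a high-probability efficient construction, so if you want that stronger conclusion you would append the same one-line Markov argument to your bound of $\binom{2^a}{k}2^{-4a}$ on the expectation.
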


\hide{
\begin{proof}
	The proof will be by the probabilistic method: we will randomly generate a candidate code $C$, and then prove that it has the desired properties with high probability in $2^a$. Then, a code with such properties must exist, and the random generation process we use implies an efficient algorithm to find such a code with high probability (though we cannot \emph{check} this efficiently).
	
	To generate our candidate code, we choose each codeword independently, uniformly at random from the set of all $b$-bit strings with $\frac{ b}{ck}$ $\mathbf{1}$s. This clearly guarantees the first property.
	
	For a fixed size-$k$ set $S$ of codewords, and a fixed codeword $x\in C\setminus S$, we now analyze the probability that $\vee(S)$ $\frac{5 b}{c^2k}$-intersects $s$.
	
	Clearly we have $\mathbf 1(\vee(S)) \le k\cdot \frac{ b}{ck} =  b/c$. Fixing the random choices of the codewords in $S$ and considering the choice of $s$, we see that there are at most $\binom{ b/c}{\frac{ b}{3ck}}$ ways of picking $\frac{ b}{3ck}$ $\mathbf 1$s that intersect $\vee(S)$, and at most $\binom{ b}{\frac{2 b}{3ck}}$ ways of picking $s$'s remaining $\frac{2 b}{3ck}$ $\mathbf 1$s, compared to $\binom{ b}{\frac{ b}{ck}}$ total choices for choosing $s$. So,

	\[
	\Prob{\text{$\vee(S)$ $\frac{ b}{3ck}$-intersects $s$}}\le
	\binom{ b/c}{\frac{ b}{3ck}}\cdot\binom{ b}{\frac{2 b}{3ck}}/ \binom{ b}{\frac{ b}{ck}}\enspace.\]

	Using the well-known bounds on binomial coefficients $(\frac{x}{y})^y \le \binom xy \le (\frac{ex}{y})^y$, we get the following:
	\begin{align*}
		\Prob{\text{$\vee(S)$ $\frac{ b}{3ck}$-intersects $s$}}
		&\le
		\left(3ek\right)^{\frac{ b}{3ck}} 
		\cdot \left(\frac{3eck}{2}\right)^{\frac{2 b}{3ck}}
		/\left(ck\right)^{\frac{b}{ck}}\\
		&=
		(3e)^{\frac{ b}{ck}}2^{\frac{-2 b}{3ck}}c^{\frac{ -b}{3ck}}\\
		&=
		\left(\frac{27e^3}{4c}\right)^{\frac{ b}{3ck}}\\
		&<\left(0.91\right)^{\beta a/3}\\
		&< 2^{-4a}
		\enspace.
	\end{align*}
	
	Taking a union bound over all codewords $s\in C\setminus S$, we find that the probability that $S$ $\frac{ b}{3ck}$-intersects any such codeword is at most $2^{-3a}$. Then, the expected number of size-$k$ sets $S$ that $\frac{ b}{3ck}$-intersect any $s\in C\setminus S$ is at most $\binom{2^a}{k} 2^{-3a}$. By the probabilistic method, there therefore \emph{exists} a an $(a,k,1/c)$-beep code in which the number of size-$k$ sets $S$ that $\frac{ b}{3ck}$-intersect any $s\in C\setminus S$ is at most $\binom{2^a}{k} 2^{-3a}$. 
	
	However, since we also want an efficient algorithm to \emph{find} an $(a,k,1/c)$-beep code, we note that by Markov's inequality the probability that more than $\binom{2^a}{k} 2^{-2a} $ size-$k$ sets $S$ that $\frac{ b}{3ck}$-intersect any $s\in C\setminus S$ is at most $2^{-a}$, and therefore the process of choosing codewords uniformly at random from all strings with $\frac{ b}{ck}$ $\mathbf{1}$s gives an $(a,k,1/c)$-beep code with probability at least $1-2^{-a}$.
\end{proof}}

\begin{proof}
	The proof will be by the probabilistic method: we will randomly generate a candidate code $C$, and then prove that it has the desired properties with high probability in $2^a$. Then, a code with such properties must exist, and the random generation process we use implies an efficient algorithm to find such a code with high probability (though \emph{checking} the code is correct would require $2^{O(ak)}$ computation).
	
	To generate our candidate code, we choose each codeword independently, uniformly at random from the set of all $b$-bit strings with $\frac{ b}{ck}$ $\mathbf{1}$s. This clearly guarantees the first property.
	
	For a fixed size-$k$ set $S$ of codewords, and a fixed codeword $x\in C\setminus S$, we now analyze the probability that $\vee(S)$ $\frac{5 b}{c^2k}$-intersects $x$.
	
	Clearly we have $\mathbf 1(\vee(S)) \le k\cdot \frac{ b}{ck} =  b/c$. Consider the process of randomly choosing the positions of the \textbf 1s of $x$. Each falls in the same position as a \textbf 1 of $\vee(S)$ with probability at most $1/c$, even independently of the random choices for the other \textbf{1}s. The probability that $\vee(S)$ $\frac{5 b}{c^2k}$-intersects $x$ is therefore at most \[\binom{\frac{b}{ck}}{\frac{5 b}{c^2k}} \cdot c^{-\frac{5 b}{c^2k}}
	\le \left(\frac{ec}{5}\right)^{\frac{5 b}{c^2k}}\cdot c^{-\frac{5 b}{c^2k}}
	\le \left(\frac{5}{e}\right)^{-\frac{5  c^2ka}{c^2k}}
	\le 2^{-4  a}
	\]
	
	Taking a union bound over all codewords $s\in C\setminus S$, we find that the probability that $\vee(S)$ $\frac{5 b}{c^2k}$-intersects any such codeword is at most $2^{-4a}$. Then, the expected number of size-$k$ sets $S$ that $\frac{5 b}{c^2k}$-intersect any $s\in C\setminus S$ is at most $\binom{2^a}{k} 2^{-3a}$. By the probabilistic method, there therefore \emph{exists} a an $(a,k,1/c)$-beep code in which the number of size-$k$ sets $S$ that $\frac{5 b}{c^2k}$-intersect any $s\in C\setminus S$ is at most $\binom{2^a}{k} 2^{-3a}$. 
	
	However, since we also want an efficient algorithm to \emph{find} an $(a,k,1/c)$-beep code, we note that by Markov's inequality the probability that more than $\binom{2^a}{k} 2^{-2a} $ size-$k$ sets $S$ that $\frac{5 b}{c^2k}$-intersect any $s\in C\setminus S$ is at most $2^{-a}$, and therefore the process of choosing codewords uniformly at random from all strings with $\frac{ b}{ck}$ $\mathbf{1}$s gives an $(a,k,1/c)$-beep code with probability at least $1-2^{-a}$.

\end{proof}

	Notice that, while the theorem holds for any $c\in \nat$, it is trivial for $c\le 2$: in this case, codewords cannot $\frac{5 b}{c^2k}$-intersect any string, since they contain only $\frac{b}{ck}$ \textbf 1s. Our application will set $c$ to be a sufficiently large constant.

Our algorithm will also make use of \emph{distance codes}. These codes have the simple criterion that every pair of codewords is sufficiently far apart by Hamming distance (which we will denote $d_H$). Distance codes are an example of error-correcting codes, which have a wealth of prior research (see e.g. \cite{HP10} for an extensive survey); here we just require a very simple object, for which we give a proof in a similar style to that of \Cref{thm:beepcode} for consistency:

\begin{definition}
An $(a,\delta)$-distance code of length $b$ is a function $D:\{0,1\}^a\rightarrow\{0,1\}^b$ such that all pairs $s\ne s'\in D$ have $d_H(s,s')\ge \delta b$.
\end{definition}

\begin{lemma}\label{lem:distcode}
	For any $\delta\in(0,\frac 12)$, $a\in \nat$, and $c_\delta \ge 12(1-2\delta)^{-2}$, there exists an $(a,\delta)$-distance code of length $b=c_\delta a$.
\end{lemma}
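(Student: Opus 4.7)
The plan is to prove the lemma by the probabilistic method, in the same spirit as \Cref{thm:beepcode} but slightly simpler: since there is no weight constraint in the definition of a distance code, I would sample each of the $2^a$ candidate codewords independently and uniformly at random from all of $\{0,1\}^b$, where $b=c_\delta a$, and show that with positive probability no two codewords end up within Hamming distance $\delta b$ of each other.

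The key estimate is the probability that a fixed pair $s\ne s'$ is too close. Because every coordinate of $s$ and of $s'$ is an independent unbiased bit, $d_H(s,s')$ is a sum of $b$ i.i.d.\ Bernoulli$(1/2)$ random variables with mean $b/2$, and since $\delta<\tfrac12$, Hoeffding's inequality gives
\[
\Prob{d_H(s,s')<\delta b}\;\le\;\exp\!\bigl(-2(\tfrac12-\delta)^2 b\bigr)\;=\;\exp\!\bigl(-\tfrac{(1-2\delta)^2}{2}\,b\bigr).
\]
With $b=c_\delta a$ and $c_\delta\ge 12(1-2\delta)^{-2}$, the exponent is at least $6a$, so the single-pair failure bound is at most $e^{-6a}<2^{-8a}$. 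A union bound over the $\binom{2^a}{2}<2^{2a}$ pairs then shows the random code is a valid $(a,\delta)$-distance code with probability at least $1-2^{-6a}$, which in particular is positive, so such a code exists.

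I do not anticipate any real obstacle. The only point to be careful about is using Hoeffding rather than a multiplicative Chernoff bound, so that the exponent scales cleanly as $(1-2\delta)^2 b$ and the constant $12$ in the hypothesis is exactly enough to beat the $2^{2a}$ union bound with a little room to spare. Moreover the same calculation already yields a valid code with high probability in $a$, so unlike in \Cref{thm:beepcode} no additional Markov step is needed to convert existence into an efficient randomized construction.
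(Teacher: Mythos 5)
Your proposal is correct and follows essentially the same route as the paper: uniformly random codewords, a concentration bound for each fixed pair, and a union bound over the at most $2^{2a}$ pairs. The only difference is cosmetic --- you invoke Hoeffding where the paper uses a multiplicative Chernoff bound on $d_H(s,s')$ with mean $b/2$ --- and both give ample slack against the union bound under the hypothesis $c_\delta \ge 12(1-2\delta)^{-2}$.
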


\begin{proof}
	We randomly generate a candidate code by choosing each codeword's entries independently uniformly at random from $\{0,1\}$. For any pair of codewords $s,s'\in D$, the probability that they differ on any particular entry is $\frac 12$. The expected distance is therefore $\frac b2$, and by a Chernoff bound, 
	
	\begin{align*}
		\Prob{d_H(s,s') \le \delta b} &= \Prob{d_H(s,s')\le 2\delta \Exp{d_H(s,s')}}\\
		&\le e^{\frac{-(1-2\delta)^2\Exp{d_H(s,s')}}{2}}
		= 	 e^{\frac{-(1-2\delta)^2c_\delta a}{4}}\enspace.
	\end{align*}

Since $c_\delta \ge 12(1-2\delta)^{-2}$,

	\begin{align*}
	\Prob{dist(s,s') \le \delta b} &\le 	 e^{-3 a}\le 2^{-4a}\enspace.
\end{align*}
	
	Taking a union bound over all $\binom{2^a}{2}\le 2^{2a}$ pairs $s,s'\in D$, we find that the probability that any pair has $dist(s,s')\le \delta b$ is at most $2^{-2a}$. Therefore, the random generation process generates an $(a,\delta)$-distance code with probability at least $1-2^{-2a}$.
\end{proof}

This construction can also be checked relatively efficiently, since one need only check the distance of $O(2^{2a})$ codeword pairs, which can be performed in $2^{O(a)}$ computation.

\section{Simulation Algorithm}
We now arrive at our main simulation algorithm. We give an algorithm for simulating a single communication round in \bcongest using $O(\Delta\log n)$ rounds of the noisy beep model. What we mean by this simulation is that each node $v$ begins with a $\gamma \log n$-bit message $m_v$ to transmit to all neighbors (where $\gamma$ is some constant, a parameter of the \bcongest model), and by the end of our beeping procedure, all nodes should be able to output the messages of all their neighbors.

Let $c_\eps$ be a constant to be chosen based on $\eps$, the noise constant. Our algorithm will make use of two codes (instantiations of those defined in the previous section): 

\begin{itemize}
	\item a $(\gamma\log n,\frac 13)$-distance code $D$ of length $c_\eps^2 \gamma \log n$, given by Lemma \ref{lem:distcode} (so long as we choose $c_\eps\ge 108$); 
	\item a $(c_\eps \gamma \log n,\Delta+1,1/c_\eps)$-beep code $C $ of length $c_\eps^3 \gamma (\Delta+1)\log n$ given by Theorem \ref{thm:beepcode} .

\end{itemize}

 The codewords in the beep code $C$ contain exactly $c_\eps^2 \gamma \log n$ $\mathbf 1$s. The purpose of using these two codes is to combine them in the following manner:

 \begin{notation}
For a binary string $s$, let $\textbf 1_i(s)$ denote the position of the $i^{th}$ $\textbf 1$ in $s$ (and \textsc{Null} if $s$ contains fewer than $i$ $\textbf 1$s).
 \end{notation}
 
 Let $CD:\{0,1\}^{c_\eps \gamma \log n}\times \{0,1\}^{\gamma \log n} \rightarrow \{0,1\}^{c_\eps^3 \gamma (\Delta+1)\log n}$ be the combined code defined as follows:
 
 \[CD(r,m)_j = \begin{cases} \mathbf 1 &\text{if for some $i\in [c_\eps^2 \gamma \log n]$,  $\textbf 1_i(C(r)) = j$, and $D(m)_i = \mathbf 1$}\\
 \mathbf 0	& \text{otherwise}\end{cases}\]
 
That is, $CD(r,m)$ is the code given by writing the codeword $D(m)$ in the positions where $C(r)$ is $\textbf 1$ (and leaving the other positions as $\textbf 0$): see Figure \ref{fig:code}.

\begin{figure}
		\centering
		\includegraphics[width=.9\linewidth]{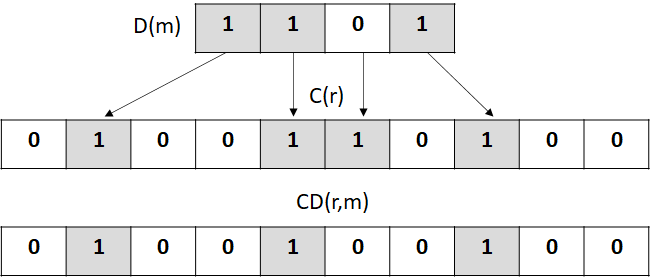}
		\caption{Combined code construction}
		\label{fig:code}
\end{figure}

The algorithm is then as follows (Algorithm \ref{alg:beepsim}):

\begin{algorithm}[H]
 	\caption{Simulation of a \bcongest round in the noisy beeping model}
 	\label{alg:beepsim}
 	\begin{algorithmic}
 		\State Each node $v$ picks $r_v\in\{0,1\}^{c_\eps \gamma \log n}$ independently uniformly at random
 		 \For{$i = 1$ to $c_\eps^3 \gamma (\Delta+1)\log n$, in round $i$,} 			
 		\State Node $v$ beeps iff $C(r_v)_i=1$	
 		\EndFor 	
 		\For{$i = 1$ to $c_\eps^3 \gamma (\Delta+1)\log n$, in round $i+c_\eps^3 \gamma (\Delta+1)\log n$,} 			
			\State Node $v$ beeps iff $CD(r_v,m_v)_i=1$	
		\EndFor 		
 	\end{algorithmic}
 \end{algorithm}

So, each node picks a random codeword from the beep code, and transmits it bitwise using beeps and silence. By the properties of the beep code, with high probability the superimposition of messages each node receives will be decodable. Then, to actually convey the message $m_v$, $v$ uses the combined code, which transmits $m_v$, encoded with a distance code, in the positions where the beep codeword $r_v$ used in the first round was $\textbf 1$. Neighbors $u$ of $v$ know when these positions are from the first round. Of course, there are some rounds when other neighbors of $u$ will be beeping, some rounds when $u$ must beep itself and cannot listen, and some rounds when the signal from $v$ is flipped by noise. However, we will show that, by a combination of the properties of our two codes, there is sufficient redundancy to overcome all three of these obstacles, and allow $u$ to correctly decode $v$'s message.

\section{Decoding the Code}
In the first phase, each node $v$ hears\footnote{For simplicity of notation, we will assume that a node counts a round in which it itself beeped as `hearing' a $\textbf 1$, and in the noisy model, flips this $\textbf 1$ with probability $\eps$ itself. Of course, in practice this is unnecessary, and having full information about its own message can only help a node.} a string we will denote $\tilde x_v$, which is the string $x_v := \bigvee_{u\in N(v)} C(r_u)$ with each bit flipped with probability $\eps\in (0,\frac 12)$, and the aim is for $v$ to decode this string in order to determine the set $R_v:= \{r_u:u\in N(v)\}$. 

We first show that, before considering noise, with high probability the superimposition of random codewords chosen by each node's inclusive neighborhood is decodable. 

\begin{lemma}\label{lem:notintersect}
	With probability at least $1- n^{3-c_\eps \gamma}$, for every node $v\in V$ and every $r\in \{0,1\}^{c_\eps \gamma \log n}$, $C(r)$ does not $5c_\eps \gamma \log n$-intersect $\bigvee_{r_u\in R_v\setminus \{r\}} C(r_u)$.
\end{lemma}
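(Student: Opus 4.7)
The plan is a union bound over all pairs $(v,r)\in V\times\{0,1\}^{c_\eps\gamma\log n}$, split according to whether $r\in R_v$ or not. The main obstacle is that the beep code property of \Cref{thm:beepcode} only bounds the number of size-$(\Delta+1)$ subsets $S$ whose superimposition $\vee(S)$ $5c_\eps\gamma\log n$-intersects a codeword \emph{outside} $S$; this directly handles the case $r\notin R_v$, but when $r\in R_v$ we instead need to control $\vee(R_v\setminus\{r\})$ intersecting a codeword that was \emph{originally} in the superimposition, a statement not immediately provided by the definition.

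As a preliminary step, I would condition on the event that, for every $v$, the values $\{r_u:u\in N(v)\}$ are all distinct; a union bound shows this fails with probability at most $n\binom{\Delta+1}{2}/2^{c_\eps\gamma\log n}\le \tfrac12 n^{3-c_\eps\gamma}$. Under this conditioning, $R_v$ is uniformly distributed over size-$(\Delta+1)$ subsets of $\{0,1\}^{c_\eps\gamma\log n}$. For the easy case $r\notin R_v$ we have $R_v\setminus\{r\}=R_v$, so the beep code property directly bounds the probability that $\vee(R_v)$ $5c_\eps\gamma\log n$-intersects any codeword outside $R_v$ by $n^{-2c_\eps\gamma}$; a union bound over $v$ contributes only $n^{1-2c_\eps\gamma}$.

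The hard case $r\in R_v$ is the main technical work. Writing $r=r_{u_0}$ and $S:=R_v\setminus\{r_{u_0}\}$, by symmetry $S$ is uniform over size-$\Delta$ subsets of $\{0,1\}^{c_\eps\gamma\log n}$, with $r_{u_0}$ uniform on its complement. The key observation is that if \emph{any} $r^*\in\{0,1\}^{c_\eps\gamma\log n}\setminus R_v$ makes $T=S\cup\{r^*\}$ a non-bad size-$(\Delta+1)$ subset, then since $r_{u_0}\notin T$ the beep code property guarantees that $\vee(T)$ does not $5c_\eps\gamma\log n$-intersect $C(r_{u_0})$, and monotonicity of $\vee$ then passes this conclusion to $\vee(S)\subseteq\vee(T)$. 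To bound the probability no such $r^*$ exists, I would use a double-counting argument: each bad size-$(\Delta+1)$ set is a superset of exactly $\Delta+1$ size-$\Delta$ sets, so for uniform $S$ the expected number of bad size-$(\Delta+1)$ supersets is at most $(2^{c_\eps\gamma\log n}-\Delta)\cdot n^{-2c_\eps\gamma}\le n^{-c_\eps\gamma}$. Markov's inequality then gives that all $2^{c_\eps\gamma\log n}-\Delta-1$ valid supersets (those with $r^*\ne r_{u_0}$) are bad with probability $O(n^{-2c_\eps\gamma})$. Union-bounding over $u_0\in N(v)$ (at most $\Delta+1$ choices) and then over $v$ contributes $O(n^{2-2c_\eps\gamma})$, which is asymptotically dominated by the $n^{3-c_\eps\gamma}$ distinctness term, yielding the required bound.
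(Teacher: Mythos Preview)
Your argument is correct, but it takes a considerably more involved route than the paper's. The paper avoids your case split $r\in R_v$ versus $r\notin R_v$ entirely by a simple padding trick: after conditioning on \emph{global} distinctness of all nodes' random strings (probability $\ge 1-n^{2-c_\eps\gamma}$), for each pair $(v,r)$ it takes $R_v\setminus\{r\}$ and pads it with random strings $r_x$ from nodes $x$ outside $N(v)$ (also excluding any node whose string happens to be $r$) until the resulting set $R_{v,r}$ has exactly $\Delta+1$ elements. Since these are the random strings of $\Delta+1$ distinct nodes, $R_{v,r}$ is a uniformly random size-$(\Delta+1)$ subset with $r\notin R_{v,r}$, so the beep-code property directly gives $\Pr[\vee(R_{v,r})\text{ $5c_\eps\gamma\log n$-intersects }C(r)]\le n^{-2c_\eps\gamma}$; monotonicity then transfers this to $\vee(R_v\setminus\{r\})\subseteq\vee(R_{v,r})$. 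A single union bound over all $n\cdot n^{c_\eps\gamma}$ pairs $(v,r)$ finishes.

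Your double-counting/Markov argument for the case $r\in R_v$ is sound, and has the minor conceptual advantage of not relying on the existence of nodes outside $N(v)$ to borrow randomness from. But the paper's padding idea is cleaner: it treats all $r$ uniformly (whether or not $r\in R_v$, and whether or not $r$ is any node's string), avoids the superset-counting detour, and automatically handles $|N(v)|<\Delta+1$, which your write-up implicitly assumes away. One further technicality worth tightening: you condition on local distinctness within every $N(v)$ simultaneously, but under that global event the marginal of $R_v$ for a fixed $v$ is not exactly uniform (constraints from overlapping neighborhoods can bias it). Either condition on global distinctness of all $r_u$ as the paper does, or handle each $v$ separately by bounding $\Pr[\text{collision in }N(v)]+\Pr[\text{bad event for }v\mid\text{no collision in }N(v)]$ before the union bound over $v$.
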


\begin{proof}
First, we see that with probability at least $1-\frac{n^2}{2^{c_\eps \gamma \log n}} = 1-n^{2-c_\eps \gamma}$, all nodes choose different random strings. For the rest of the proof we condition on this event.
	
For each $v\in V$, $r\in \{0,1\}^{c_\eps \gamma \log n}$, let $R_{v,r}$ be a set of nodes' random strings defined as follows: starting with $R_v\setminus \{r\}$ (which is a set of input messages of size at most $\Delta+1$), add arbitrary $r_x$ from nodes $x\notin (N(v)\cup \{w\})$ until the set is of size exactly $\Delta+1$. Since we are conditioning on the event that all nodes generate different random strings, $R_{v,w}$ is a set of $\Delta+1$ distinct random strings from $\Delta+1$ distinct nodes, none of which are $w$.

By the properties of a $(c_\eps \gamma \log n,\Delta+1,1/c_\eps)$-beep code, therefore, the probability that $C(r)$ $5c_\eps \gamma \log n$-intersects $\bigvee_{r_u\in R_{v,w}}  C(r_u)$ is at most $2^{-2c_\eps \gamma \log n} = n^{-2c_\eps \gamma}$. If $C(r)$ does not $5c_\eps \gamma \log n$-intersect $\bigvee_{r_u\in R_{v,w}}  C(r_u)$, then it also does not $5c_\eps \gamma \log n$-intersect $\bigvee_{r_u\in R_v\setminus \{r\}}  C(r_u)$, since $R_{v,w}$ is a superset of $R_v\setminus \{r\}$.

The number of possible pairs $v\in V$, $r\in \{0,1\}^{c_\eps \gamma \log n}$ is $n^{1+c_\eps \gamma}$. Taking a union bound over all of these, we find that $C(r)$ does not $5c_\eps \gamma \log n$-intersect $\bigvee_{r_u\in R_v\setminus \{r\}} C(r_u)$ for any pair with probability at least $1-n^{1+c_\eps \gamma-2c_\eps \gamma} = 1-n^{1-c_\eps \gamma}$ by a union bound. Finally, removing the conditioning on the event that nodes' random strings are all different, we reach the condition of the lemma with probability at least $1-n^{1-c_\eps \gamma} - n^{2-c_\eps \gamma} \ge 1- n^{3-c_\eps \gamma}$.
\end{proof}

Next we must analyze how noise affects the bitstrings that nodes hear. For any node $v$, let $x_v$ denote the string $v$ heard, i.e., $\bigvee_{u\in N(v)} C(r_u)$, after each bit is flipped with probability $\eps\in (0,\frac 12)$. To decode the set $R_v$, $v$ will take $\tilde R_v = \{r\in \{0,1\}^{c_\eps \gamma \log n}: \text{$C(r)$ does not $\frac{2\eps+1}{4}c_\eps^2\gamma \log n$-intersect }\lnot \tilde x_v\}$. That is, it includes all codewords which have fewer than $\frac{2\eps+1}{4}c _\eps^2\gamma \log n$ \textbf 1s in positions where $\tilde x_v$ does not.

Notice that, in the absence of noise, all $C(r)$ for $r\in R_v$ have zero \textbf 1s in positions where $x_v$ did not, and all $C(r)$ for $r\notin R_v$ have at least $c_\eps(c_\eps - 5)\gamma \log n$, since $C(r)$ contains exactly $c_\eps^2 \gamma \log n$ \textbf 1s and, by \Cref{lem:notintersect}, fewer than $5c_\eps\gamma \log n$ of them  intersect $x_v$. So, the goal of our next lemma is to show that noise does not disrupt this by too much.

\begin{lemma}\label{lem:Rdecode}
For sufficiently large constant $c_\eps$, with probability at least $1-n^{4-c_\eps\gamma}$, for all nodes $v$, $\tilde R_v = R_v$.
\end{lemma}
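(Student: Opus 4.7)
The plan is to condition on the high-probability event of Lemma~\ref{lem:notintersect}, and then show via a Chernoff bound that for every node $v$ and every candidate codeword $r$, the intersection count $\mathbf{1}(C(r)\wedge \lnot\tilde x_v)$ concentrates so tightly around its mean that it lands on the correct side of the decoding threshold $\frac{2\varepsilon+1}{4}c_\varepsilon^2\gamma\log n$ with probability exponentially close to $1$.

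I split into two cases for a fixed $v$ and fixed $r$. If $r\in R_v$, then at every one of the $c_\varepsilon^2\gamma\log n$ positions where $C(r)=\mathbf{1}$ we have $x_v=\mathbf{1}$, so $\lnot x_v=\mathbf{0}$; after noise each such bit of $\lnot\tilde x_v$ is independently $\mathbf{1}$ with probability $\varepsilon$. Thus $\mathbf{1}(C(r)\wedge \lnot\tilde x_v)$ is Binomial$(c_\varepsilon^2\gamma\log n,\varepsilon)$ with mean $\varepsilon c_\varepsilon^2\gamma\log n$, which is below the threshold by an additive gap of $\tfrac{1-2\varepsilon}{4}\cdot c_\varepsilon^2\gamma\log n$. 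A standard Chernoff/Hoeffding bound gives failure probability $\exp\!\bigl(-\Omega_\varepsilon(c_\varepsilon^2\gamma\log n)\bigr)$. If $r\notin R_v$, then conditioned on Lemma~\ref{lem:notintersect} at least $c_\varepsilon(c_\varepsilon-5)\gamma\log n$ of the $\mathbf{1}$-positions of $C(r)$ have $\lnot x_v=\mathbf{1}$, so at each such position $\lnot\tilde x_v=\mathbf{1}$ with probability $1-\varepsilon$. Hence $\mathbf{1}(C(r)\wedge \lnot\tilde x_v)$ stochastically dominates Binomial$(c_\varepsilon(c_\varepsilon-5)\gamma\log n,\, 1-\varepsilon)$ with mean $(1-\varepsilon)(1-5/c_\varepsilon)\cdot c_\varepsilon^2\gamma\log n$; Chernoff then bounds the probability of falling below the threshold by $\exp\!\bigl(-\Omega_\varepsilon(c_\varepsilon^2\gamma\log n)\bigr)$, provided the mean exceeds the threshold by a constant factor of $c_\varepsilon^2\gamma\log n$.

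The central quantitative step, which is also the only non-mechanical part, is to choose $c_\varepsilon$ so that Case~2's mean strictly beats the threshold, i.e.\ so that
\[
(1-\varepsilon)\bigl(1-5/c_\varepsilon\bigr)\;>\;\tfrac{2\varepsilon+1}{4}.
\]
Since $\varepsilon<\tfrac12$, we have $1-\varepsilon-\tfrac{2\varepsilon+1}{4}=\tfrac{3(1-2\varepsilon)}{4}>0$, so the inequality holds for any $c_\varepsilon$ large enough in terms of $\varepsilon$ (for instance taking $c_\varepsilon$ so that $5/c_\varepsilon<\tfrac{1-2\varepsilon}{4(1-\varepsilon)}$). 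Enlarging $c_\varepsilon$ further if needed, the Chernoff exponent can be made at least $5\gamma c_\varepsilon\log n$, so each individual failure probability is at most $n^{-5c_\varepsilon\gamma}$.

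Finally I take a union bound over all $n$ nodes and all $2^{c_\varepsilon\gamma\log n}=n^{c_\varepsilon\gamma}$ possible codewords $r$, a total of $n^{1+c_\varepsilon\gamma}$ pairs. The per-pair failure probability $n^{-5c_\varepsilon\gamma}$ easily absorbs this factor, yielding a failure probability of at most $n^{4-4c_\varepsilon\gamma}$ for the Chernoff step; adding the $n^{3-c_\varepsilon\gamma}$ failure of Lemma~\ref{lem:notintersect} yields an overall failure probability of at most $n^{4-c_\varepsilon\gamma}$, as claimed. The main obstacle, as noted, is ensuring that the constants $c_\varepsilon$ and the threshold $\tfrac{2\varepsilon+1}{4}$ are chosen compatibly so that both the $r\in R_v$ and $r\notin R_v$ analyses tolerate a constant gap from the threshold uniformly in $\varepsilon\in(0,\tfrac12)$; once the two gaps are established, concentration and the union bound are routine.
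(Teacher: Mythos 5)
Your proof is correct and follows essentially the same route as the paper's: condition on the event of Lemma~\ref{lem:notintersect}, show via Chernoff bounds that for $r\in R_v$ the intersection count stays below the threshold $\frac{2\eps+1}{4}c_\eps^2\gamma\log n$ and for $r\notin R_v$ it stays above, then union bound over all $n^{1+c_\eps\gamma}$ pairs $(v,r)$ and add back the failure probability of Lemma~\ref{lem:notintersect}. The only difference is presentational: you phrase the $r\notin R_v$ case via stochastic domination by a Binomial and keep the constants as $\Omega_\eps(\cdot)$, whereas the paper works out explicit lower bounds on $c_\eps$; the quantitative content is the same.
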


\begin{proof}

Conditioning on the event of \Cref{lem:notintersect}, all $C(r)$ for $r\notin R_v$ $c_\eps(c_\eps - 5)\gamma \log n$-intersect $\lnot x_v$. Then, for such an $r$ to be in $\tilde R_v$, more than $\textbf 1(C(r) \land \lnot x_v)-\frac{2\eps+1}{4}c_\eps^2\gamma \log n$ of the intersection positions would have to be flipped by noise. The probability of this is clearly minimized when $\textbf 1(C(r) \land \lnot x_v)$ is as low as possible, i.e., $c_\eps(c_\eps - 5)\gamma \log n$. Then, $c_\eps(c_\eps - 5)\gamma \log n - \frac{2\eps+1}{4}c_\eps^2\gamma \log n = (\frac{3-2\eps}{4}c_\eps - 5 )c_\eps\gamma \log n$ positions must be flipped, and the expected number of such flipped positions is $\mu := \eps (c_\eps - 5)c_\eps\gamma \log n$.

To show a low probability of failure, we need that the number of positions that must be flipped for $r$ to be incorrectly categorized is more than its expectation. To do so, we bound the ratio of the two quantities:

\begin{align*}
	\frac{(\frac{3-2\eps}{4}c_\eps - 5 )c_\eps\gamma \log n}{\eps (c_\eps - 5)c_\eps\gamma \log n} &= \frac{\frac{3-2\eps}{4}c_\eps - 5 }{\eps (c_\eps - 5)}
\\ &\ge \frac{\frac{3-2\eps}{4}c_\eps - 5 }{\frac{c_\eps}{2}} &\text{since $\eps\in(0,\frac12)$}	\\
	& =\frac32-\eps - \frac{10}{c_\eps} 
	\enspace.
\end{align*}

We will set $c_\eps \ge \frac{60}{1-2\eps}$. Then, 
\begin{align*}
	\frac{(\frac{3-2\eps}{4}c_\eps - 5 )c_\eps\gamma \log n}{\eps (c_\eps - 5)c_\eps\gamma \log n} &\ge \frac32-\eps - \frac{1-2\eps}{6} = \frac{4-2\eps}{3}>1\enspace.
\end{align*}

Now that we have bounded the ratio above $1$, we can apply a Chernoff bound:

\begin{align*}
	\Prob{\textbf 1(C(r) \land \lnot \tilde x_v) < \frac{2\eps+1}{4} c_\eps^2\gamma \log n } &\le \Prob{\text{more than $\frac{\frac{3-2\eps}{4}c_\eps - 5}{\eps (c_\eps - 5)}\mu$ intersection positions are flipped}}\\
		&\le exp(- \left(\frac{\frac{3-2\eps}{4}c_\eps - 5}{\eps (c_\eps - 5)}-1 \right)^2 \mu/3 )
		\\
		&\le exp(- \left(\frac{4-2\eps}{3}-1 \right)^2 \mu/3 )\\
		&= exp(- \left(1-2\eps \right)^2 \eps (c_\eps - 5)c_\eps\gamma \log n/27 )
		\enspace.
\end{align*}

We will now further require that $c_\eps \ge \frac{54}{\left(1-2\eps \right)^2\eps}+5$, which gives:

\begin{align*}
	\Prob{\textbf 1(C(r) \land \lnot \tilde x_v) < \frac{2\eps+1}{4} c_\eps^2\gamma \log n } &\le
	exp(- 2c_\eps\gamma \log n) \le n^{-2c_\eps\gamma}
	\enspace.
\end{align*}

Conversely, for some $r'\in R_v$, $C(r)$ does not $1$-intersect $\lnot x_v$ (since it is contained in the superimposition that produces $x_v$). So, for it to $\frac{2\eps+1}{4}c_\eps^2\gamma \log n$-intersect $\lnot \tilde x_v$, at least $\frac{2\eps+1}{4}c_\eps^2\gamma \log n$ of the positions in which $C(r')$ has a \textbf 1 (of which there are exactly $c_\eps^2\gamma \log n$, by definition) would need to be flipped in $\tilde x_v$. The expected number of such flipped positions is $\mu' :=\eps c_\eps^2\gamma \log n$. Since $\eps\in(0,\frac12)$, have $\frac{2\eps+1}{4}c_\eps^2\gamma \log n >\frac{4\eps}{4}c_\eps^2\gamma \log n = \mu'$, so we can again apply a Chernoff bound:

\begin{align*}
	\Prob{\textbf 1(C(r') \land \lnot \tilde x_v) \ge \frac{2\eps+1}{4}c_\eps^2\gamma \log n } &\le \Prob{\text{at least $\frac{2\eps+1}{4}c_\eps^2\gamma \log n$ of $C(r')$'s \textbf 1s are flipped}}\\
	&\le exp(- \left(\frac{\frac{2\eps+1}{4}}{\eps}-1 \right)^2 \mu'/3 ) \\
	&=exp(- \left(\frac{1}{4\eps}-\frac 12 \right)^2 \eps c_\eps^2\gamma \log n/3 )\enspace.
\end{align*}

Requiring that $c_\eps \ge \frac{6}{\eps}\left(\frac{1}{4\eps}-\frac 12 \right)^{-2}$ again gives:

\begin{align*}
	\Prob{\textbf 1(C(r') \land \lnot \tilde x_v) \ge \frac{2\eps+1}{4}c_\eps^2\gamma \log n } &\le
	exp(-2c_\eps\gamma \log n) \le n^{-2c_\eps\gamma}\enspace.
\end{align*}

So, each codeword is correctly placed in or out of $\tilde R_v$ with probability at least $1-n^{-2c_\eps\gamma}$. Taking a union bound over all $2^{c_\eps\gamma\log n}$ codewords, we have $\tilde R_v = R_v$ with probability at least $1-n^{-c_\eps\gamma}$. Finally, taking another bound over all nodes $v\in V$ and removing the conditioning on the event of \Cref{lem:notintersect} (which occurs with probability at least $1- n^{3-c_\eps \gamma}$ ) gives correct decoding at all nodes with probability at least $1-n^{4-c_\eps\gamma}$. The lemma requires setting $c_\eps \ge \max\{\frac{6}{\eps}\left(\frac{1}{4\eps}-\frac 12 \right)^{-2}, \frac{54}{\left(1-2\eps \right)^2\eps}+5,\frac{60}{1-2\eps}\}$.
\end{proof}

We now analyze the second stage of the algorithm, in which nodes transmit their messages using the combined code, and show that this code allows the messages to be decoded.

\begin{lemma}\label{lem:decode}
	In the second phase of the algorithm, with probability at least $1-n^{\gamma+6-c_\eps\gamma}$, all nodes $v$ can successfully decode $\{m_w:w\in N(v)\}$ (so long as $c_\eps$ is at least a sufficiently large constant).
\end{lemma}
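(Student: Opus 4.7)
The plan is for each $v$ to decode each $m_w$ ($w \in N(v)$) via nearest-codeword decoding in $D$, but restricted to those positions at which only $w$ could have beeped in phase 2. By \Cref{lem:Rdecode}, $v$ already knows $R_v$; conditioning on this event, define the unique-position set $U_w = \{j : C(r_w)_j = \mathbf{1} \text{ and } C(r_u)_j = \mathbf{0}\ \forall u \in N(v)\setminus\{w\}\}$ and the corresponding $D$-index set $S_w = \{i : \mathbf{1}_i(C(r_w)) \in U_w\}$. By \Cref{lem:notintersect}, $|S_w^c| \le 5c_\eps\gamma\log n$, and hence $|S_w| \ge c_\eps(c_\eps-5)\gamma\log n$.

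The key observation is that at every $j \in U_w$ no node in $N(v)\setminus\{w\}$ beeps in phase 2, since phase-2 beeps only occur at $\mathbf{1}$-positions of the phase-1 codeword, and $j$ is a $\mathbf{1}$-position only of $C(r_w)$ among $N(v)$. Hence the pre-noise bit heard at $j$ equals $CD(r_w,m_w)_j = D(m_w)_i$ (where $i$ is the index with $\mathbf{1}_i(C(r_w)) = j$), and after noise it is this bit flipped independently with probability $\eps$. Restricted to $S_w$, $v$'s received string is therefore $D(m_w)|_{S_w}$ with each bit independently flipped with probability $\eps$, and $v$ decodes by picking the $m$ that minimizes restricted Hamming distance.

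For the failure analysis I would fix a wrong candidate $m' \ne m_w$ and set $B = \{i \in S_w : D(m_w)_i \ne D(m')_i\}$. The distance-code guarantee $d_H(D(m_w),D(m')) \ge c_\eps^2 \gamma \log n/3$, combined with $|S_w^c| \le 5c_\eps\gamma\log n$, gives $|B| \ge c_\eps^2\gamma\log n/3 - 5c_\eps\gamma\log n \ge c_\eps^2\gamma\log n/6$ for $c_\eps$ sufficiently large. A mis-decoding toward $m'$ requires at least $|B|/2$ of the $|B|$ independent $\mathrm{Bernoulli}(\eps)$ noise flips on $B$ to favor $m'$; Hoeffding bounds this by $\exp(-2(1/2-\eps)^2|B|)$, which for $c_\eps = \Omega((1-2\eps)^{-2})$ is at most $n^{-(c_\eps\gamma + \gamma + 3)}$. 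A union bound over $v \in V$, $w \in N(v)$, and the $2^{\gamma\log n}=n^\gamma$ candidates $m'$ contributes a factor of $n^{\gamma+2}$, and adding the failure probabilities from \Cref{lem:Rdecode} and \Cref{lem:notintersect} yields the claimed total failure bound $\le n^{\gamma+6-c_\eps\gamma}$.

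The main obstacle is that on positions in $S_w^c$ the received bits are adversarially contaminated by collisions with other neighbors' beeps, so one cannot naively apply nearest-codeword decoding to the full length-$c_\eps^2\gamma\log n$ string: in the worst case the code's pairwise distance could be entirely eaten by shared positions. Restricting to $S_w$ converts the problem to a clean independent-noise decoding problem, but one must then verify both that the pairwise distance of $D$ survives the restriction (source of the condition $c_\eps^2/3 \gg 5c_\eps$) and that the Chernoff exponent on $|B|$ beats the $c_\eps\gamma\log n$ arising from the size of the code and the union bound (source of the further requirement $c_\eps = \Omega((1-2\eps)^{-2})$). Both constraints can be absorbed into the lower bounds on $c_\eps$ already imposed by the earlier lemmas.
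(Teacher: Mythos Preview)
Your argument is correct, and it takes a somewhat different route from the paper's. The paper has $v$ decode $m_w$ by nearest-codeword decoding over the \emph{entire} length-$c_\eps^2\gamma\log n$ substring $\tilde y_{v,w}$ (all positions where $C(r_w)$ has a $\mathbf 1$), and then carries the $\le 5c_\eps\gamma\log n$ contaminated positions through the analysis as an additive slack in both $\mu$ and $\mu'$; this leads to a sequence of Chernoff bounds comparing $d_H(D(m_w),\tilde y_{v,w})$ and $d_H(D(m),\tilde y_{v,w})$ to the threshold $\frac{1+4\eps}{6}c_\eps^2\gamma\log n$, with several case-by-case constraints on $c_\eps$. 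You instead have $v$ \emph{discard} the contaminated positions up front and decode only on $S_w$, which reduces the channel to a pure BSC$(\eps)$ on a code whose restricted minimum distance is still at least $c_\eps^2\gamma\log n/3 - 5c_\eps\gamma\log n$; a single Hoeffding bound on the disagreement set $B$ then suffices. Both approaches rely on the same structural facts (\Cref{lem:notintersect} for $|S_w^c|\le 5c_\eps\gamma\log n$ and \Cref{lem:Rdecode} so that $v$ can actually compute $S_w$ from $R_v$), and both arrive at the same asymptotic requirement $c_\eps = \Omega((1-2\eps)^{-2})$. Your route is shorter and avoids the threshold-splitting calculations, at the cost of a decoder that must explicitly identify the collision-free positions rather than simply taking a nearest codeword on the full received string.
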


\begin{proof}
Conditioned on the event of \Cref{lem:Rdecode}, all nodes $v$ now know $R_v$.	

In the second stage of the algorithm, in the absence of noise $v$ would hear the string $\bigvee_{w\in N(v)} CD(r_w,m_w)$,  which we will denote $y_v$. To decode the message $m_w$, for some $w\in N(v)$, it examines the subsequence $y_{v,w}$ defined by $(y_{v,w})_j = (y_v)_i : \textbf 1_j(C(w)) = i$. We denote the noisy versions of these strings that $v$ actually hears by $\tilde y_v$ and $\tilde y_{v,w}$ respectively. (Note that $v$ does not know which neighbor $w$ the strings $r_w$ and $\tilde y_{v,w}$ belong to, but it can link them together, which is all that is required at this stage.) Node $v$ decodes $m_w$ as the string $\tilde m_w\in \{0,1\}^{\gamma\log n}$ minimizing $d_H(D(\tilde m_w), \tilde y_{v,w})$. We must show that, with high probability, $\tilde m_w = m_w$.

Conditioned on the event of \Cref{lem:notintersect}, each $C(r_w)$ for $w\in N(v)$ does not $5c_\eps\gamma\log n$-intersect $\bigvee_{r_u\in R_v\setminus \{r_w\}} C(r_u)$. That is, there are at least $(c_\eps-5)c_\eps\gamma\log n$ positions in $x_v$ in which $C(r_w)$ has a \textbf 1 and no other $C(r_u)$ for $u\in N(v)$ does. In these positions $j$, $(y_{v,w})_j = D(m_w)_j$. So, in total, \[d_H(D(m_w),y_{v,w}) \le 5c_\eps\gamma \log n\enspace.\]

Under noise, each of the positions in which $y_{v,w}$ matches $D(m_w)$ will be flipped with probability $\eps$ in $\tilde y_{v,w}$. So, denoting $\Exp{d_H(D(m_w),\tilde y_{v,w})}$ by $\mu$, we have: 

\[\eps c_\eps^2 \gamma\log n\le \mu \le \eps c_\eps^2 \gamma\log n+ 5c_\eps\gamma \log n \enspace.\]

Meanwhile,by the property of a $(\gamma \log n,\frac 13)$-distance code, for any $m\ne m_w \in \{0,1\}^{\gamma\log n}$, \[d_H(D(m),y_{v,w}) \ge d_H(D(m),D(m_w))-d_H(D(m_w),y_{v,w}) \ge  \frac 13 c_\eps^2\gamma\log n-   5c_\eps\gamma \log n\enspace.\]

To lower-bound $\Exp{d_H(D(m),\tilde y_{v,w})}$ (which we denote by $\mu'$), we see that $\mu' = (1-\eps)d_H(D(m),y_{v,w}) + \eps( c_\eps^2 \gamma\log n- d_H(D(m),y_{v,w}))$. Since $\eps>\frac 12$, this is minimized when $d_H(D(m),y_{v,w})$ is as small as possible, i.e.,  $\frac 13 c_\eps^2\gamma\log n-   5c_\eps\gamma \log n$. Then, 

\begin{align*}
	\mu' &\ge (1-\eps)(\frac 13 c_\eps^2\gamma\log n-   5c_\eps\gamma \log n) + \eps( c_\eps^2 \gamma\log n- (\frac 13 c_\eps^2\gamma\log n-   5c_\eps\gamma \log n))\\
	&=(\frac 13 c_\eps - \frac 23 \eps c_\eps - 5 + 10\eps + \eps c_\eps)c_\eps\gamma\log n\\
	&\ge \frac{1+\eps}{3}c_\eps^2\gamma\log n -5c_\eps\gamma\log n\enspace.
\end{align*}

Since $\eps<\frac 12$, we have $\frac{1+\eps}{3}>\eps$, and so we can see that for sufficiently large $c_\eps$, $\mu\le \mu'$. So, it remains to show that $d_H(D(m_w),\tilde y_{v,w})$ and $d_H(D(m),\tilde y_{v,w})$ are concentrated around their expectations.

We first show that, with high probability, $d_H(D(m_w),\tilde y_{v,w}) \le \frac{1+4\eps}{6}c_\eps^2\gamma\log n$. Note that if we set $c_\eps\ge \frac{60}{1-2\eps}$, then 
\[\frac{1+4\eps}{6}c_\eps  =\eps c_\eps + \frac{1-2\eps}{6}c_\eps   >  \eps c_\eps+ 5 \enspace,\]
and so 
\[\frac{1+4\eps}{6}c_\eps^2\gamma\log n> \eps c_\eps^2\gamma\log n + 5c_\eps\gamma\log n \ge \mu\enspace.\]

Then, we can apply a Chernoff bound:
\begin{align*}
	\Prob{d_H(D(m_w),\tilde y_{v,w}) \ge \frac{1+4\eps}{6}c_\eps^2\gamma\log n } &\le \mu \cdot \Prob{d_H(D(m_w),\tilde y_{v,w}) \ge \frac{(1+4\eps)c_\eps}{6} \big/\left( \eps c_\eps+ 5\right)}\\
	&\le exp(- \left(\frac{(1+4\eps)c_\eps}{6\eps c_\eps+ 30}-1 \right)^2 \mu/2 ) \\
	&= exp(- \left(\frac{(1-2\eps)c_\eps-30 }{6\eps c_\eps+ 30}\right)^2 \mu/2 ) \enspace.
\end{align*}

The expression $\frac{(1-2\eps)c_\eps-30 }{6\eps c_\eps+ 30}$ is increasing in $c_\eps$. Therefore, if we ensure that $c_\eps\ge \frac{30}{\eps(1-2\eps)}$, we have 
\[\frac{(1-2\eps)c_\eps -30}{6\eps c_\eps+ 30}\ge \frac{\frac{30}{\eps} -30}{\frac{180}{1-2\eps }+ 30} = \frac{\frac{1}{\eps} -1}{\frac{6}{1-2\eps }+ 1} =\frac{(1-\eps)(1-2\eps) }{\eps(7-2\eps)} \enspace.\]

Then,

\begin{align*}
	\Prob{d_H(D(m_w),\tilde y_{v,w}) \ge \frac{1+4\eps}{6}c_\eps^2\gamma\log n }
	&\le exp(- \left(\frac{(1-\eps)(1-2\eps) }{\eps(7-2\eps)}\right)^2 \mu/2 ) \\
	&\le exp(- \left(\frac{(1-\eps)(1-2\eps) }{\eps(7-2\eps)}\right)^2 c_\eps^2\gamma\log n/2 )\enspace.
\end{align*}

Finally, if we also ensure that $c_\eps \ge 6\left(\frac{(1-\eps)(1-2\eps) }{\eps(7-2\eps)}\right)^{-2}$, 

\begin{align*}
	\Prob{d_H(D(m_w),\tilde y_{v,w}) \ge \frac{1+4\eps}{6}c_\eps^2\gamma\log n }
	&\le exp(- 3c_\eps\gamma\log n)\\
	&\le n^{-4c_\eps\gamma}\enspace.
\end{align*}

We similarly wish to show that with high probability, $d_H(D(m),\tilde y_{v,w})> \frac{(1+4\eps)}{6}c_\eps^2\gamma\log n$ (for $m\ne m_w$). Again, since we have set $c_\eps\ge \frac{60}{1-2\eps}$,

\[\frac{(1+4\eps)}{6}c_\eps = \frac{1+\eps}{3}c_\eps - \frac{1-2\eps}{6}c_\eps <  \frac{1+\eps}{3}c_\eps -5\]

So, 

\[\frac{1+4\eps}{6}c_\eps^2\gamma\log n< \frac{1+\eps}{3} c_\eps^2\gamma\log n - 5c_\eps\gamma\log n \le \mu'\enspace.\]
Then, we can apply a Chernoff bound:
\begin{align*}
	\Prob{d_H(D(m),\tilde y_{v,w})\le \frac{(1+4\eps)}{6}c_\eps^2\gamma\log n} &\le \Prob{d_H(D(m),\tilde y_{v,w}) \le \mu'\cdot \frac{(1+4\eps)c_\eps}{6} \big/\left(\frac{1+\eps}{3}c_\eps -5\right)}\\
	&\le exp(- \left(1-\frac{(1+4\eps)c_\eps}{2(1+\eps)c_\eps - 30} \right)^2 \mu/3 ) \\
	&\le exp(- \left(\frac{(1-2\eps)c_\eps - 30 }{6\eps c_\eps+ 30}\right)^2 \mu/3 )\\ 
	&\le  exp(- \left(\frac{(1-\eps)(1-2\eps) }{\eps(7-2\eps)}\right)^2 c_\eps^2\gamma\log n/3 )\\
	&\le   exp(- 2c_\eps\gamma\log n)\\
	&\le n^{-2c_\eps\gamma}
	\enspace.
\end{align*}

Taking a union bound over all strings in $ \{0,1\}^{\gamma\log n}$, we find that with probability at least $1-n^{\gamma-2c_\eps\gamma}$, $d_H(D(m_w),\tilde y_{v,w}) < \frac{1+4\eps}{6}c_\eps^2\gamma\log n$ and $d_H(D(m),\tilde y_{v,w})> \frac{(1+4\eps)}{6}c_\eps^2\gamma\log n$ for all $m\ne m_w$. So, $v$ successfully decodes $m_w$. Another union bound over all $w\in N(v)$ gives probability at least $1-n^{\gamma+1-2c_\eps\gamma}$ that $v$ correctly decodes the entire set $\{m_w:w\in N(v)\}$. Finally, removing the conditioning on the event of \Cref{lem:Rdecode} and taking a further union bound over all nodes $v$, the probability that all nodes correctly decode their neighbors' messages is at least $1-n^{\gamma+6-c_\eps\gamma}$. We required that \[c_\eps \ge \max\left\{\frac{30}{\eps(1-2\eps)}, 6\left(\frac{(1-\eps)(1-2\eps) }{\eps(7-2\eps)}\right)^{-2}\right\} \enspace.\]
\end{proof}

\Cref{lem:decode} shows that \Cref{alg:beepsim} successfully simulates a \bcongest communication round with high probability. By simulating all communication rounds in sequence, we can simulate any $n^{O(1)}$ \bcongest in its entirety at an $O(\Delta\log n)$ overhead. Note that essentially all \bcongest(and \congest) algorithms are $n^{O(1)}$-round, since this is sufficient to inform all nodes of the entire input graph. So the only problems with super-polynomial round complexities would be those in which nodes are given extra input of super-polynomial size. We are not aware of any such problems having been studied, and therefore \Cref{thm:main} applies to all problems of interest.

\begin{theorem}\label{thm:main}
Any $T=n^{O(1)}$-round \bcongest\ algorithm can be simulated in the noisy beeping model in $O(T\Delta \log n)$ rounds, producing the same output with with probability at least $1-n^{-2}$. 	
\end{theorem}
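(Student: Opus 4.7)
The plan is to chain together $T$ sequential invocations of \Cref{alg:beepsim}, one for each round of the original \bcongest algorithm. Each invocation takes $2c_\eps^3\gamma(\Delta+1)\log n = O(\Delta\log n)$ rounds of the noisy beeping model, so the total round count is $O(T\Delta\log n)$ as claimed. After each invocation, every node uses its decoded neighbor messages to update the local state of the simulated \bcongest algorithm exactly as in the original protocol; correctness therefore reduces to showing that, with probability at least $1-n^{-2}$, every invocation succeeds at every node simultaneously.

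For the probability bound, \Cref{lem:decode} shows that a single invocation fails at some node with probability at most $n^{\gamma+6-c_\eps\gamma}$. Writing $T = n^{\alpha}$ for the implicit constant $\alpha$ in $T = n^{O(1)}$, I would choose $c_\eps$ to be a sufficiently large constant so that (i) all preconditions of \Cref{lem:decode} hold, and (ii) $c_\eps\gamma \geq \alpha+\gamma+8$, which makes the per-round failure probability at most $n^{-\alpha-2}$. A union bound over the $T \leq n^\alpha$ simulated rounds then gives a total failure probability of at most $n^{-2}$, as required. Crucially, $\alpha$ is a fixed constant depending only on the algorithm being simulated, so $c_\eps$ remains a constant and the $O(\Delta\log n)$ per-round overhead is unaffected.

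The one subtle point to verify is independence across invocations: the union bound is valid because each node redraws $r_v$ freshly at the start of every invocation, and the noisy beeping model flips each received bit independently in each round. Consequently the per-round failure events can be bounded without tracking any conditional distributions, and no correlation between invocations needs to be ruled out. Beyond this observation the argument is a routine constant-choice and union-bound exercise; I do not anticipate any genuine obstacle, since the heavy lifting has already been done in \Cref{lem:decode}.
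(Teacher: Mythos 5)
Your proposal is correct and follows essentially the same route as the paper: simulate each \bcongest round with \Cref{alg:beepsim}, invoke \Cref{lem:decode} for the per-round failure bound, choose $c_\eps$ large enough relative to the polynomial exponent of $T$, and union bound over the $T$ rounds. Your extra remark about independence is harmless but not needed, since the union bound requires only that the per-invocation failure bound of \Cref{lem:decode} holds for arbitrary (history-dependent) messages, which it does.
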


\begin{proof}
Each round of the \bcongest\ algorithm, in which each node $v$ broadcasts a $\gamma\log n$-bit message to all of its neighbors, is simulated using \Cref{alg:beepsim} with sufficiently large constant $c_\eps$. By \Cref{lem:decode}, each simulated communication round succeeds (has all nodes correctly decode the messages of their neighbors) with probability at least $1-n^{\gamma+6-c_eps\gamma}$. Taking a union bound over all $T$ rounds, and choosing $c_\eps$ sufficiently large, gives a probability of at least $1-n^{-2}$ that all simulated communication rounds succeed. In this case, the algorithm runs identically as it does in \bcongest, and produces the same output. The running time of \Cref{alg:beepsim} is $O(\Delta\log n)$, so the overall running time is $O(T\Delta \log n)$.
\end{proof}

We then reach an $O(\Delta^2\log n)$-overhead simulation for \congest.

\begin{corollary}
Any $T=n^{O(1)}$-round \congest\ algorithm can be simulated in the noisy beeping model in $O(T\Delta^2 \log n)$ rounds, producing the same output with with probability at least $1-n^{-2}$. 	
\end{corollary}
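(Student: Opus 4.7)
The plan is to reduce \congest\ to \bcongest\ at a standard $\Delta$-factor overhead, then invoke \Cref{thm:main}. Concretely, to simulate a single round of \congest, in which each node $v$ wishes to send an $O(\log n)$-bit message $m_{v,u}$ to each neighbor $u\in N(v)$, we spend $\Delta$ rounds of \bcongest: assuming each node has locally ordered its (up to $\Delta$) neighbors as $u_1,\dots,u_{d(v)}$, in the $i$-th \bcongest\ round node $v$ broadcasts the pair $(\mathrm{ID}(u_i),m_{v,u_i})$ if $i\le d(v)$, and an arbitrary padding symbol otherwise. This pair fits in $O(\log n)$ bits, so it is a valid \bcongest\ message. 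Each neighbor $w$ of $v$ hears all $d(v)$ broadcasts of $v$ over these $\Delta$ rounds, discards those whose recipient ID is not its own, and thereby recovers $m_{v,w}$. Summing over $\Delta$ \bcongest\ rounds per \congest\ round, the original $T$-round \congest\ algorithm becomes a $T' := T\Delta$ round \bcongest\ algorithm, which is still $n^{O(1)}$.

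Next I would apply \Cref{thm:main} to this \bcongest\ algorithm. The theorem gives a simulation in $O(T'\Delta\log n) = O(T\Delta^2\log n)$ rounds of the noisy beeping model, in which each of the $T'$ simulated \bcongest\ rounds succeeds with probability at least $1-n^{\gamma+6-c_\eps\gamma}$ by \Cref{lem:decode}. A union bound over the $T' = n^{O(1)}$ simulated rounds, together with choosing the constant $c_\eps$ in \Cref{alg:beepsim} sufficiently large (depending on the exponents implicit in $T = n^{O(1)}$ and in the reduction), yields total failure probability at most $n^{-2}$. Conditioned on all simulated \bcongest\ rounds succeeding, the composite simulation reproduces the \congest\ execution bit-for-bit and hence produces the same output.

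There is no real obstacle: the only mild subtlety is confirming that the per-neighbor messages of \congest\ can be packed with recipient IDs into a single $O(\log n)$-bit \bcongest\ message (so that the \bcongest\ simulation theorem applies verbatim), and that the overall $n^{O(1)}$ round budget is preserved after the $\Delta$-factor blow-up. Both are immediate since $\Delta\le n$ and $T = n^{O(1)}$.
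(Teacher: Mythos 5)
Your proposal is correct and follows essentially the same route as the paper: simulate each \congest\ round by $\Delta$ \bcongest\ rounds in which each node broadcasts $\langle ID_u, m_{v\rightarrow u}\rangle$ pairs, then apply \Cref{thm:main} to the resulting $T\Delta = n^{O(1)}$-round \bcongest\ algorithm. The only cosmetic difference is that the paper prepends one \bcongest\ round in which nodes broadcast their IDs (so senders know which ID to attach), whereas you implicitly assume neighbor IDs are already known; this does not change the argument or the bound.
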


\begin{proof}
A $T=n^{O(1)}$-round \congest\ algorithm can be simulated in $O(T\Delta)$ rounds in \bcongest\ as follows: nodes first broadcast their IDs to all neighbors, and then each \congest\ communication round is simulated in $\Delta$ \bcongest\ rounds by having each node $v$ broadcast $\langle ID_u, m_{v\rightarrow u}\rangle$ to its neighbors, for every $u\in N(v)$ in arbitrary order. Then, by \Cref{thm:main}, this algorithm can be simulated in $O(T\Delta^2 \log n)$ rounds.
\end{proof}

\hide{
\section{Devices Without Local Randomness}
We remark that, in the noisy beeping model, we can run our simulation protocols even if nodes to not have inherent access to randomness, since we can extract random bits from the noise behavior. Specifically, we can have all nodes listen for the first $T$ rounds. These nodes then hear $\mathbf 1$ with probability $\eps$ and $\mathbf 0$ with probability $1-\eps$, independently, in each of these rounds. We can use a simple trick attributed to Von Neumann to generate unbiased random bits from this behavior: a node considers these rounds in pairs. If it hears the same result for both rounds in the pair, they are discarded. Otherwise, hearing $\mathbf{01}$ is mapped to to $\mathbf 0$, and $\mathbf{10}$ to $\mathbf 1$.

In this way, each pair of rounds has probability $2\eps(1-\eps)$ of being mapped to a uniform random bit, and is otherwise discarded. Algorithm \ref{alg:beepsim} requires each node to have $\Theta(\log n)$ random bits, and runs for $\Theta(\Delta\log n)$ rounds. So, prior to running Algorithm \ref{alg:beepsim}, we can run $\Theta(\log n)$ rounds of randomness extraction, obtaining sufficient random bits at each node with high probability by a Chernoff bound, without increasing the overall asymptotic running-time.
}

\section{Lower bounds}

We now show lower bounds on the number of rounds necessary to simulate \bcongest and \congest, based on the hardness of a simple problem we call  $B$-bit Local Broadcast. We define the $B$-bit Local Broadcast problem as follows: 

\begin{definition}[$B$-Bit Local Broadcast]
Every node $v$ is equipped with a unique identifier $ID_v\in [n]$. Every node $v$ receives as input $\{\langle ID_u, m_{v\rightarrow u}\rangle : u\in N(v)\}$: that is, a set containing messages $m_{v\rightarrow u}\in\{0,1\}^B$ for each of $v$'s neighbors $u$, coupled with the ID of $u$ to identify the destination node. Each node $v$ must output the set $\{\langle ID_u, m_{u\rightarrow v}\rangle : u\in N(v)\}$ (i.e. the set of messages \emph{from} each of its neighbors, coupled with their IDs).
\end{definition}	
	
\begin{lemma}
$B$-Bit Local Broadcast requires $\Omega(\Delta^2 B)$ rounds in the beeping model (even without noise), for any algorithm succeeding with probability more than $2^{-\frac12\Delta^2 B}$.
\end{lemma}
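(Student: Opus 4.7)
The plan is to prove this via a counting/information-theoretic argument on a complete bipartite hard instance. Specifically, I would take the graph to be $K_{\Delta,\Delta}$ with sides $U$ and $V$ (each of size $\Delta$), assign IDs arbitrarily, and consider the input distribution in which all $V$-to-$U$ messages $\{m_{v\to u}: v\in V, u\in U\}$ are drawn independently and uniformly at random from $\{0,1\}^B$; the $U$-to-$V$ messages can be fixed to anything. This gives a uniformly random string of length $\Delta\cdot\Delta\cdot B = \Delta^2 B$ that the $U$-side must collectively recover in order to succeed.

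The key observation is that in the bipartite clique, every node in $U$ that listens in a given round hears exactly the same bit, namely the OR of whether any node in $V$ beeped. Hence, if we fix the $U$-side input and the algorithm's randomness, then in each round the total new information available to all of $U$ about $V$'s behavior is a single bit $B_t^V\in\{0,1\}$. By induction on rounds, the entire joint state of $U$ after $R$ rounds (including every $u$'s choices of when to beep versus listen) is a deterministic function of the $R$-bit transcript $(B_1^V,\dots,B_R^V)$ together with the fixed $U$-side input and randomness. Consequently, the tuple of outputs produced by the $U$-nodes also depends on $V$'s inputs only through this $R$-bit string, and takes at most $2^R$ distinct values over all realisations of $V$.

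Success in $B$-bit Local Broadcast requires this output tuple to be a specific element determined by $V$'s inputs, a uniformly random element of a set of size $2^{\Delta^2 B}$, and independent of $U$'s input and the algorithm's randomness. For any fixed algorithm randomness and $U$-side input, the probability over the uniform $V$-inputs that the correct tuple is produced is therefore at most $2^R/2^{\Delta^2 B}=2^{R-\Delta^2 B}$. Averaging over the algorithm's randomness preserves this bound, so for the overall success probability to exceed $2^{-\frac{1}{2}\Delta^2 B}$ we must have $R-\Delta^2 B>-\tfrac{1}{2}\Delta^2 B$, i.e.\ $R>\tfrac{1}{2}\Delta^2 B=\Omega(\Delta^2 B)$.

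The only real subtlety — and the main point that needs to be argued carefully rather than waved at — is the claim that all listening $U$-nodes receive the same bit per round and hence a single bit of ``$V$-information'' suffices to summarise the round. This relies on the bipartite-clique structure (so that every $u\in U$ has exactly $V$ as its neighbourhood) and on the noiseless assumption (so that there is no independent per-node noise to inflate entropy); with these in hand, the inductive reduction to an $R$-bit transcript is straightforward. Everything else is an accounting calculation; in particular, no Fano-type inequality is needed because the threshold $2^{-\Delta^2 B/2}$ is so small that the crude counting bound $2^{R-\Delta^2 B}$ on success probability is already tight enough.
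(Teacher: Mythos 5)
Your proposal is correct and follows essentially the same route as the paper's proof: the hard instance is the complete bipartite graph $K_{\Delta,\Delta}$ with uniformly random $B$-bit messages on one side, all listening nodes on the other side hear only the single OR-bit per round, so their joint output is determined (given their fixed inputs and randomness) by one of at most $2^R$ transcripts, yielding success probability at most $2^{R-\Delta^2 B}$ and hence $R>\frac12\Delta^2 B$. The only cosmetic difference is that the paper pads the instance with $n-2\Delta$ isolated vertices so the graph formally has $n$ vertices and maximum degree $\Delta$.
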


\begin{proof}
	The graph we use as our hard instance is as follows: we take the complete bipartite graph $K_{\Delta,\Delta}$, and add $n-2\Delta$ isolated vertices. This graph then has $n$ vertices and maximum degree $\Delta$. Arbitrarily fix unique IDs in $[n]$ for each node. We will only consider the nodes of $K_{\Delta,\Delta}$ to show hardness. Arbitrarily denote one part of the bipartition $L$ and the other $R$. For nodes $v\in L$, we choose each $m_{v\rightarrow u}$ independently uniformly at random from $\{0,1\}^{B}$. We set all other $m_{x\rightarrow x}$ to $\textbf 0^{\log n}$ (so, in particular, the inputs for all nodes $u\in R$ are identical).
	
Let $\mathcal R$ denote the concatenated strings of local randomness of all nodes in $R$ (in any arbitrary fixed order). Then, the output of any node $u\in R$ must be fully deterministically dependent on the node IDs (which are fixed), $\mathcal R$, $u$'s input messages (which are identically fixed to be all $\textbf 0$s), and the pattern of beeps and silence of nodes in $L$ (and note that all nodes in $R$ hear the same pattern: a beep if an node in $L$ beeps, and silence otherwise). An algorithm running for $T$ rounds has $2^T$ possible such patterns of beeps and silence.

So, the overall output of all nodes in $R$ must be one of $2^T$ possible distributions, where the distribution is over the randomness of $\mathcal R$. The correct output for these nodes is uniformly distributed over $2^{\Delta^2 B}$ possibilities (the choices of input messages for $L$). The probability of a correct output is therefore at most $2^{T-\Delta^2 B}$. So, any algorithm with $T\le \frac12\Delta^2 B$ succeeds with probability at most $2^{-\frac12\Delta^2 B}$.

\end{proof}

Having shown a lower bound on the problem in the beeping model, upper bounds in \bcongest and \congest imply lower bounds on the overhead of simulation.

\begin{lemma}
$B$-Bit Local Broadcast can be solved deterministically in $O(\Delta \lceil B/\log n\rceil )$ rounds of \bcongest and in $O(\lceil B/\log n\rceil )$ rounds of \congest.
\end{lemma}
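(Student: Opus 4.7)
The plan is to exhibit direct deterministic protocols in each model that chunk each $B$-bit message into $\lceil B/\log n\rceil$ pieces of $\Theta(\log n)$ bits to match the per-round bandwidth.

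For \congest, in round $i$ (for $i = 1, \dots, \lceil B/\log n\rceil$), each node $v$ sends to each neighbor $u$ the $i$-th chunk of $m_{v\to u}$ along the edge to $u$. This is legal because \congest permits distinct $O(\log n)$-bit messages on distinct edges, so all outgoing chunks to all $|N(v)|$ neighbors can be dispatched in parallel. A recipient identifies the sender from the port/edge on which the chunk arrives; since $v$ knows its neighbors' IDs from the input to $B$-bit Local Broadcast, a single preliminary ID-exchange round (adding only $O(1)$ rounds) establishes the edge-to-ID mapping if the model does not provide it natively. After $\lceil B/\log n\rceil$ rounds, every node has reassembled $m_{u\to v}$ for every $u\in N(v)$, giving the claimed $O(\lceil B/\log n\rceil)$ complexity.

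For \bcongest, the plan is to serialize the (at most) $\Delta$ distinct outgoing messages through the single broadcast slot per round. Let each $v$ fix any ordering $u_1, \dots, u_{|N(v)|}$ of its neighbors. For every pair $(i,j)$ with $i \in [\lceil B/\log n\rceil]$ and $j \in [\Delta]$, we devote one round in which $v$ broadcasts the tagged pair $\langle ID_{u_j}, \text{chunk}_i(m_{v\to u_j})\rangle$ (or a dummy broadcast if $v$ has fewer than $j$ neighbors). Each neighbor of $v$ receives this broadcast and accepts the chunk iff the attached ID matches its own, discarding it otherwise. Both $ID_{u_j}$ and $\text{chunk}_i(m_{v\to u_j})$ fit in $O(\log n)$ bits, so one \bcongest round per $(i,j)$ pair suffices. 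The total is $\Delta \cdot \lceil B/\log n\rceil$ rounds, matching the claimed bound.

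There is essentially no obstacle in this lemma; both protocols are routine bandwidth-accounting arguments. The only point to verify is that each transmitted chunk can be correctly demultiplexed at its intended recipient, which follows immediately from the uniqueness of IDs (for the \bcongest protocol) and from the edge-to-ID correspondence available in \congest. No probabilistic reasoning and no non-trivial coding is required.
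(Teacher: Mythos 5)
Your proposal is correct and follows essentially the same route as the paper: in \congest each node sends the (chunked) message $m_{v\rightarrow u}$ directly to $u$, and in \bcongest it broadcasts ID-tagged messages for each neighbor in turn, yielding the $O(\lceil B/\log n\rceil)$ and $O(\Delta\lceil B/\log n\rceil)$ bounds respectively. Your explicit chunking and the one-round ID/port exchange are just slightly more detailed bookkeeping than the paper's terse argument, not a different approach.
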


\begin{proof}
In \bcongest, each node $v$ simply broadcasts the strings $\langle ID_u, m_{v\rightarrow u}\rangle$ for each $u\in N(v)$, taking $O(\Delta\lceil B/\log n\rceil )$ rounds. In \congest, node $v$ instead sends $m_{v\rightarrow u}$ to node $u$ for each $u\in N(v)$, taking $O(\lceil B/\log n\rceil) $ rounds.
\end{proof}

\begin{corollary}
Any simulation of \bcongest\ in the noiseless beeping model (and therefore also the noisy beeping model) has $\Omega(\Delta\log n)$ overhead. Any simulation of \bcongest\ in the noiseless (and noisy) beeping model has $\Omega(\Delta^2 \log n)$ overhead. 
\end{corollary}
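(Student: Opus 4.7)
The plan is to combine the $\Omega(\Delta^2 B)$ beeping-model lower bound on $B$-bit Local Broadcast with the matching \bcongest\ and \congest\ upper bounds established in the two preceding lemmas. I would fix $B = \log n$, so that the beeping lower bound becomes $\Omega(\Delta^2 \log n)$ rounds for any algorithm succeeding with probability exceeding $2^{-\Delta^2 \log n / 2}$, a criterion trivially met by any simulation of interest (in particular by the simulation of \Cref{thm:main}, which succeeds with probability $1-n^{-2}$). For this choice of $B$, the preceding lemma supplies a deterministic $O(\Delta)$-round \bcongest\ algorithm and a deterministic $O(1)$-round \congest\ algorithm for $B$-bit Local Broadcast.

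The argument is then a direct composition. Suppose there were a \bcongest\ simulation with overhead $S$. Applying it to the $O(\Delta)$-round \bcongest\ algorithm for $B$-bit Local Broadcast yields a beeping algorithm running in $O(\Delta \cdot S)$ rounds and succeeding with probability at least $1-1/\mathrm{poly}(n)$ (a union bound over the $O(\Delta)$ simulated \bcongest\ rounds suffices). Matching this against the $\Omega(\Delta^2 \log n)$ beeping lower bound forces $S = \Omega(\Delta \log n)$. The \congest\ case is strictly analogous: a simulation with overhead $S'$, composed with the $O(1)$-round \congest\ algorithm, gives an $O(S')$-round beeping algorithm, and matching against the same lower bound yields $S' = \Omega(\Delta^2 \log n)$.

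The only technicality worth flagging is verifying that the composed beeping algorithms meet the success-probability threshold in the $B$-bit Local Broadcast lower bound. Since the underlying \bcongest\ and \congest\ algorithms are deterministic, the only source of failure is the simulation itself, and since $2^{-\Delta^2 \log n / 2}$ is an extraordinarily small quantity, any simulation with non-negligible success probability (certainly any that one would consider competitive) clears the threshold comfortably after a union bound over at most $O(\Delta)$ simulated rounds. I do not expect any genuine obstacle here; the corollary is essentially a black-box reduction, and all the substantive work was performed in establishing the beeping lower bound and the \bcongest/\congest\ upper bounds for $B$-bit Local Broadcast.
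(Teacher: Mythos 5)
Your proposal is correct and is essentially the argument the paper intends: instantiate $B=\log n$, compose the hypothetical simulation with the $O(\Delta)$-round \bcongest\ (resp.\ $O(1)$-round \congest) upper bound for $B$-bit Local Broadcast, and compare against the $\Omega(\Delta^2\log n)$ beeping lower bound, noting that the success-probability threshold $2^{-\frac12\Delta^2\log n}$ is trivially cleared. Your handling of the probability technicality (the underlying message-passing algorithms are deterministic, so only the simulation can fail) matches the paper's framing, so there is nothing to add.
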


\section{Application: Maximal Matching}

In this section we give an example application of our simulation, to the problem of maximal matching. The problem is as follows: we assume each node has a unique $O(\log n)$-bit ID. For a successful maximal matching, each node must either output the ID of another node, or \textsc{Unmatched}. The outputs must satisfy the following:

\begin{itemize}
	\item Symmetry: iff $v$ outputs $ID(u)$, then $u$ outputs $ID(v)$. Since each node outputs at most one ID, this implies that the output indeed forms a matching.
	\item Maximality: for every edge $\{u,v\}$ in the graph, $u$ and $v$ do not both output \textsc{Unmatched}.
\end{itemize}

To our knowledge, no bespoke maximal matching algorithm has previously been designed for the beeping model (either noisy or noiseless) or for \bcongest. So, the fastest existing beeping algorithm is obtained by simulating the best \congest algorithms using the simulation of \cite{AGL20}. Since an $O(\Delta+\log^* n)$-round \congest algorithm for maximal matching exists \cite{PR01}, the running time under \cite{AGL20}'s simulation is therefore $O(\Delta^4\log n + \Delta^3\log n \log^* n )$.

We show an $O(\log n)$-round \bcongest algorithm for maximal matching, which our simulation then converts to an $O(\Delta\log^2 n)$-round algorithm in the noisy beeping model, thereby improving the running time by around a $\Delta^3/\log n$ factor.

The base of our algorithm is Luby's algorithm for maximal independent set \cite{Luby86}, which can be applied to produce a maximal matching (\Cref{alg:LubyMM}). (Often this algorithm is stated with real sampled $x(e)$ values from $[0,1]$; however, since we must communicate these values using $O(\log n)$-bit messages, we instead use integers from $[n^9]$. It can be seen that with probability at least $1-n^{-4}$, no two of the values sampled during the algorithm are the same, so we can condition on this event for the rest of the analysis and avoid considering ties.)

	\begin{algorithm}[H]
	\caption{Maximal Matching: Luby's Algorithm}
	\label{alg:LubyMM}
	\begin{algorithmic}
		\For{$O(\log n)$ iterations,} 	
		\State Each edge $e$ samples $x(e)$ independently uniformly at random from $[n^9]$
		\State Edge $e$ joins the matching $M$ if $x(e)<x(e')$ for all $e'$ adjacent to $e$
		\State Endpoints of edges in $M$ drop out of the graph
		\EndFor 	
	\end{algorithmic}
\end{algorithm}

It is well-known (see \cite{Luby86}) that Luby's algorithm produces a maximal matching in $O(\log n)$ rounds with high probability. To implement this in \bcongest we must make some minor changes to account for the fact that it is nodes, not edges, that communicate (\Cref{alg:bcongestMM}). 

The aim of the algorithm is as follows: if, in a particular round $i$, an edge $\{u,v\}$ has a lower $x(\{u,v\})$ value than all neighboring edges, the following process occurs. Its higher-ID endpoint (assume W.L.O.G. that this is $u$) first broadcasts $\textsc{Propose}\langle \{u,v\}, x(\{u,v\})\rangle$. The other endpoint $v$ then broadcasts $\textsc{Reply}\langle \{u,v\}\rangle$. Node $u$ broadcasts $\textsc{Confirm}\langle \{u,v\}\rangle$, and finally node $v$ also broadcasts $\textsc{Confirm}\langle \{u,v\}\rangle$. These $\textsc{Confirm}$ messages cause nodes adjacent to $u$ and $v$ to be aware that $u$ and $v$ will be ceasing participation (because they have been matched), and so any edges to them can be discarded from the graph.

\begin{algorithm}[H]
	\caption{Maximal Matching in \bcongest}
	\label{alg:bcongestMM}
	\begin{algorithmic}
		\State Each node $v$ broadcasts its ID
		\State Let $E_v$ be the set of $v$'s adjacent edges
		\State Let $H_v$ be the set of $v$'s adjacent edges for which $v$ the higher-ID endpoint 
		\For{$i = 1$ to $O(\log n)$, in round $i$,}	
		\State $v$ samples $x(e)$ independently uniformly at random from $[n^9]$ for each $e\in H_v$
		\State $v$ broadcasts $\textsc{Propose}\langle e_v, x(e_v)\rangle$, where $x(e_v)$ is the unique minimum of $v$'s sampled values (if it exists)
		\State Let $e'_v$ be the edge with the minimum $x(e'_v)$ for which $v$ received $\textsc{Propose}\langle e'_v, x(e'_v)\rangle$
		\If{$x(e'_v)<x(e_v)$}
		\State $v$ broadcasts $\textsc{Reply}\langle e'_v\rangle$
		\EndIf
		\If{$v$ received $\textsc{Reply}\langle e_v\rangle$ and did not broadcast a $\textsc{Reply}$}
		\State $v$ broadcasts $\textsc{Confirm}\langle e_v\rangle$
		\State $v$ outputs $e_v\in MM$ and ceases participation
		\EndIf
		\If{$v$ received $\textsc{Confirm}\langle e'_v\rangle$}
		\State $v$ broadcasts $\textsc{Confirm}\langle e'_v\rangle$
		\State $v$ outputs $e'_v\in MM$ and ceases participation
		\EndIf	
		\If{$v$ received $\textsc{Confirm}\langle \{w,z\}\rangle$ for any $w,z\ne v$}
		\State $v$ removes $\{w,v\}$ and $\{z,v\}$ from $E_v$ and $H_v$ (if present).
		\EndIf	
		\If{$E_v$ is empty}
		\State $v$ ceases participation
		\EndIf	
		\EndFor
	\end{algorithmic}
\end{algorithm}

\begin{lemma} \label{lem:maxmatch}
	If \Cref{alg:bcongestMM} terminates (i.e. causes all nodes to cease participation), it outputs a maximal matching.
\end{lemma}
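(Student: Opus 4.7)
The plan is to check, under the termination hypothesis, the two defining properties of a maximal matching: validity (the output forms a matching) and maximality.

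Validity has two parts. First, each node outputs at most once, since every branch of \Cref{alg:bcongestMM} that writes an output also causes the node to cease participation, ending its loop. Second, the output is symmetric. I would prove this by tracing the CONFIRM exchange: each $\textsc{Confirm}\langle\{u,v\}\rangle$ broadcast is accompanied by the broadcaster outputting $\{u,v\}\in MM$, so following any CONFIRM chain back to its first broadcast, we find a node outputting via the $\textsc{Reply}\langle e_v\rangle$ branch, and since $e_v\in H_v$ this node must be the higher-ID endpoint $u$ of $\{u,v\}$. The other endpoint $v$ then hears $u$'s $\textsc{Confirm}\langle\{u,v\}\rangle$, and triggers its own output. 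For this second step I would verify the invariant that the PROPOSE/REPLY handshake in that iteration sets $e'_v=\{u,v\}$, so that $v$ takes the received-CONFIRM output branch (rather than merely discarding $u$ from its neighborhood).

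For maximality, suppose for contradiction that some input edge $\{u,v\}$ has both endpoints outputting \textsc{Unmatched} at termination. Since neither is matched, both must have ceased participation only via the $E_v=\emptyset$ rule. In particular, $\{u,v\}$ was removed from $E_v$ at some point. The only rule that removes an edge from $E_v$ is the one triggered by a received $\textsc{Confirm}\langle\{w,z\}\rangle$ with $w,z\ne v$, and for that rule to remove $\{u,v\}$ we need $u\in\{w,z\}$. But as observed above, a $\textsc{Confirm}$ is broadcast only by a node that simultaneously outputs the corresponding edge, so $u$ must have been matched in $\{w,z\}$, contradicting the assumption.

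The main obstacle is the symmetry step: one must confirm that $u$'s CONFIRM reaches $v$'s matched-output branch rather than falling through. This rests on verifying that the PROPOSE/REPLY handshake forces $e'_v=\{u,v\}$ (using the natural reading that $v$ considers PROPOSEs only for edges incident to itself), after which the remainder of the argument is routine bookkeeping over the algorithm's branches.
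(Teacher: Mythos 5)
Your proof is correct, but it organizes the ``matching'' half of the claim differently from the paper. The paper's proof establishes independence of $MM$ directly: it fixes an edge $\{u,v\}$ added to $MM$ in round $i$, notes that this forces the full \textsc{Propose}/\textsc{Reply}/\textsc{Confirm}/\textsc{Confirm} pattern, and then runs a four-case analysis showing that neither $u$ nor $v$ can act as the higher- or lower-ID endpoint of any other edge joining $MM$ in the same round, with the \textsc{Confirm}-triggered edge removals ruling out later rounds. You instead prove the symmetry property (both endpoints of any confirmed edge output it) by tracing the \textsc{Confirm} chain back to the branch-1 broadcaster and verifying the handshake invariant $e'_v=\{u,v\}$, and then get the matching property for free from ``each node outputs at most one edge'' --- exactly the implication the paper already records in its problem definition. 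This buys you a proof that never needs the same-round case analysis; the price is that the whole argument leans on the handshake invariant, which you correctly flag as the crux (and which does hold: a \textsc{Reply}$\langle\{u,v\}\rangle$ received by $u$ can, under the intended semantics, only come from $v$ with $e'_v=\{u,v\}$, so $v$ then takes the received-\textsc{Confirm} output branch). Your maximality argument is the paper's invariant argument phrased as a contradiction; the only loose point is the final step ``so $u$ must have been matched in $\{w,z\}$'': the \textsc{Confirm} heard by $v$ may have been broadcast by the \emph{other} endpoint of $\{w,z\}$ rather than by $u$, so you need your already-established symmetry (both endpoints of a confirmed edge output it) to conclude that $u$ is matched --- a one-line patch. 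Your observation that $e'_v$ must be read as ranging only over edges incident to $v$ is apt; the paper's own proof tacitly makes the same assumption.
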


\begin{proof}
	We first prove maximality. Nodes only cease participation when they are adjacent to an edge in $MM$, or when they have no remaining adjacent edges. Edges are only removed when they are adjacent to an edge in $MM$. So, upon termination, there are no edges in the original graph that are neither in $MM$ nor adjacent to an edge in $MM$, and therefore $MM$ is a maximal matching.
	
	We now prove independence. Let $\{u,v\}$ be an edge which is added to $MM$ in round $i$, and assume W.L.O.G. that $u$ is the higher-ID endpoint. It is clear that, since $\{u,v\}$ is added to $MM$, we must have the behavior described above ($u$ broadcasts $\textsc{Propose}\langle \{u,v\}, x(\{u,v\})\rangle$, $v$ broadcasts $\textsc{Reply}\langle \{u,v\}\rangle$, $u$ broadcasts $\textsc{Confirm}\langle \{u,v\}\rangle$, $v$ broadcasts $\textsc{Confirm}\langle \{u,v\}\rangle$). Then, we can show that this behavior pattern excludes the possibility that any adjacent edge also joins $MM$ in round $i$:
	
	\begin{enumerate}
		\item $u$ cannot act as the higher-ID endpoint of any other edge joining $MM$, since it only \textsc{Propose}s $\{u,v\}$.
		\item $u$ cannot act as the lower-ID endpoint of any other edge joining $MM$, since it \textsc{Confirm}s an edge it \textsc{Propose}d, and therefore cannot have broadcast any \textsc{Reply}.
		\item $v$ cannot act as the higher-ID endpoint of any other edge joining $MM$, since it broadcasts a \textsc{Reply} and therefore does not $\textsc{Confirm}\langle e_v\rangle$.
		\item $v$ cannot act as the lower-ID endpoint of any other edge joining $MM$, since only broadcasts \textsc{Reply}$\langle \{u,v\}\rangle$, and does not \textsc{Reply} for any other edge.
	\end{enumerate}
	
	So, no adjacent edge to $\{u,v\}$ can join in round $i$. Furthermore, all nodes adjacent to $u$ and $v$ receive a $\textsc{Confirm}\langle \{u,v\}\rangle$ message and therefore all other edges adjacent to $u$ and $v$ are removed from the graph. So, no edge adjacent to $\{u,v\}$ can be added to $MM$ in future rounds either. This guarantees that $MM$ is an independent set of edges.
\end{proof}

\begin{notation}
	We will use the notation $e\sim e'$ to mean $e\cap e' \ne \emptyset$, i.e., $e'$ shares \emph{at least} one endpoint with $e$ (and can be $e$ itself). We will denote $|\{e'\in E: e\sim e'\}|$ by $d(e)$, i.e. the number of adjacent edges of $e$, including $e$ itself.
\end{notation}

\begin{lemma}
	In any particular round $i$, the expected number of edges removed from the graph is at least $\frac m2 $.
\end{lemma}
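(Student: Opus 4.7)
The plan is a simple double-counting argument based on the auxiliary quantity $X := \sum_{e \in MM} d(e)$, where $MM$ denotes the (random) set of edges that join the matching in round $i$. First I would compute $E[X]$ directly, then bound $X$ from above in terms of the number of removed edges.

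Conditioning on the event (of probability at least $1-n^{-4}$) that all sampled $x$-values are distinct, a fixed edge $e$ joins $MM$ iff $x(e)$ is the minimum among the $d(e)$ i.i.d.\ values $\{x(e') : e' \sim e\}$, so by symmetry $\Pr[e \in MM] = \frac{1}{d(e)}$. Linearity of expectation then gives
\[E[X] \;=\; \sum_{e \in E} \Pr[e \in MM] \cdot d(e) \;=\; \sum_{e \in E} 1 \;=\; m.\]

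Next, let $R$ denote the number of edges removed from the graph in round $i$, i.e.\ edges $e'$ such that some $e \in MM$ satisfies $e \sim e'$. Swapping the order of summation rewrites $X = \sum_{e' \in E} c(e')$, where $c(e') := |\{e \in MM : e \sim e'\}|$, so that $R = |\{e' : c(e') \ge 1\}|$. The key structural observation is that $c(e') \le 2$ for every $e' \in E$: any matched edge adjacent to $e'$ must share an endpoint with $e'$, and because $MM$ is a matching, at most one edge of $MM$ can be incident to each of $e'$'s two endpoints (in particular, if $e' \in MM$ itself then $c(e') = 1$). Consequently $X \le 2R$, and taking expectations yields $E[R] \ge E[X]/2 = m/2$, as required.

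I do not anticipate a substantive obstacle; the entire content of the argument is the bound $c(e') \le 2$, which depends crucially on $MM$ being a matching rather than an arbitrary subset of edges, and which is the natural way to absorb the double-counting inherent in summing $d(e)$ over $e \in MM$.
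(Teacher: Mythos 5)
Your proof is correct and is essentially the paper's own argument: the paper, too, lower-bounds the expected number of removals by $\frac12\sum_{\{u,v\}\in E}(d(u)+d(v)-1)\cdot\frac{1}{d(u)+d(v)-1}=\frac m2$, with the factor $\frac12$ justified by exactly the double-counting that your bound $c(e')\le 2$ (valid because $MM$ is a matching, i.e.\ \Cref{lem:maxmatch}) makes precise. The only overstatement is your ``iff'': in \Cref{alg:bcongestMM} an edge can join $MM$ without being a local minimum among all adjacent edges (a cheaper adjacent edge may simply never be \textsc{Propose}d by its higher-ID endpoint), so strictly one only has $\Prob{e\in MM}\ge 1/d(e)$ rather than equality---but this is the direction your bound needs, so nothing breaks.
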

This lemma refers to the \emph{current} graph at round $i$, i.e. without all edges and nodes that have been removed in previous rounds, and $m$ is accordingly the number of edges in the current graph.

\begin{proof}
	It is easy to see that, as intended, an edge $\{u,v\}$ is added to $MM$ if it has a lower $x(\{u,v\})$ value than all neighboring edges: its higher-ID endpoint (W.L.O.G. $u$), which sampled the value $x(\{u,v\})$, will denote the edge as $e_u$ and \textsc{Propose} it, the $x(\{u,v\})$ value will be lower than that for which $v$ \textsc{Propose}d and so $v$ will $\textsc{Reply}\langle \{u,v\}\rangle$, and $u$ will not hear a lower-valued edge to \textsc{Reply} and will therefore $\textsc{Confirm}\langle \{u,v\}\rangle$. $v$ will also $\textsc{Confirm}\langle \{u,v\}\rangle$, and all edges adjacent to $\{u,v\}$ will be removed from the graph. There are $d(u)+d(v)-1$ such edges.
	
	The probability that $x(\{u,v\})< x(e)$ for all $e\sim\{u,v\}$ with $e\ne \{u,v\}$ is $\frac{1}{d(u)+d(v)-1}$. So, the expected number of edges removed from the graph is at least $\frac{1}{2}\sum_{\{u,v\}\in E}  (d(u)+d(v)-1)\frac{1}{d(u)+d(v)-1} = \frac m2$ (where the $\frac{1}{2}$ factor arises since each end can be removed by either of its endpoints being matched, so is double-counted in the sum).
\end{proof}

\begin{lemma}\label{lem:bcongestmatching}
	\Cref{alg:bcongestMM} performs maximal matching in $O(\log n)$ rounds of \bcongest, succeeding with high probability.
\end{lemma}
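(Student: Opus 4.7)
The plan is to reduce to the already-proved per-iteration progress lemma, and then verify both (a) that each outer-loop iteration costs only a constant number of \bcongest rounds, and (b) that $O(\log n)$ iterations suffice to empty the graph with high probability.

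First I would observe that each iteration of \Cref{alg:bcongestMM}'s outer loop consists of a constant number of broadcast steps: one \textsc{Propose}, one \textsc{Reply}, and at most two \textsc{Confirm}s, each of which is a single $O(\log n)$-bit message. In \bcongest, each of these is one round, so $T$ iterations of the outer loop translate into $O(T)$ \bcongest rounds, plus one initial round to exchange IDs. Correctness upon termination is already given by \Cref{lem:maxmatch}, so it remains only to show that termination happens after $O(\log n)$ iterations with high probability.

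For this, let $m_i$ denote the number of edges in the current graph at the start of iteration $i$, so $m_0 \le \binom{n}{2} < n^2$. The previous lemma gives $\Exp{m_i - m_{i+1} \mid m_i} \ge m_i / 2$, so $\Exp{m_{i+1} \mid m_i} \le m_i/2$. Taking expectations and iterating, $\Exp{m_T} \le m_0 \cdot 2^{-T} \le n^2 \cdot 2^{-T}$. Choosing $T = c\log n$ for sufficiently large constant $c$ yields $\Exp{m_T} \le n^{-3}$, and since $m_T$ is a non-negative integer, Markov's inequality gives $\Prob{m_T \ge 1} \le n^{-3}$.

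Combined with the earlier union-bound event that all $O(\log n)\cdot O(n^2)$ sampled $x(e)$ values are distinct (which holds with probability at least $1-n^{-4}$, as noted before \Cref{alg:LubyMM}), we conclude that with probability at least $1-O(n^{-3})$ the graph is empty after $T = O(\log n)$ iterations, at which point by \Cref{lem:maxmatch} $MM$ is a maximal matching. The total cost in \bcongest is therefore $O(\log n)$ rounds. The only subtle step is the transition from ``expected halving'' to high-probability termination, which is resolved cleanly by applying Markov to the non-negative integer $m_T$ after the expectation has been driven polynomially small; no concentration of $m_i$ per round is needed.
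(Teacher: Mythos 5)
Your proposal is correct and follows essentially the same route as the paper: correctness upon termination via \Cref{lem:maxmatch}, then expected halving of the edge count per iteration (conditioned on distinct samples) iterated for $O(\log n)$ iterations, with Markov's inequality applied to the resulting polynomially small expectation and a union bound with the distinctness event. Your explicit accounting that each outer-loop iteration costs only $O(1)$ \bcongest rounds is a small addition the paper leaves implicit, but the argument is the same.
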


\begin{proof}
	By Lemma \ref{lem:maxmatch}, Algorithm \ref{alg:bcongestMM} produces a maximal matching if it terminates. Conditioning on the event that all sampled values are distinct, the algorithm removes at least half of the edges in the graph in each iteration in expectation. After $4\log n$ iterations, therefore, the expected number of edges remaining is at most $n^2 \cdot n^{-4} = n^{-2}$, and therefore by Markov's inequaility, with probability at least $1-n^{-2}$ the number of edges remaining is $0$ and the algorithm has terminated. Removing the conditioning on the event that sampled values are distinct, the algorithm terminates with probability at least $1-n^{-2}-n^{-4}$.
\end{proof}

\begin{theorem}
Maximal matching can be performed in $O(\Delta \log^2 n)$ rounds in the noisy beeping model, succeeding with high probability.
\end{theorem}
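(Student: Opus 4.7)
The plan is a direct composition of two results already in hand: the \bcongest\ maximal matching algorithm (\Cref{alg:bcongestMM}) and the simulation of \Cref{thm:main}. First, I would invoke \Cref{lem:bcongestmatching}, which states that \Cref{alg:bcongestMM} solves maximal matching in $O(\log n)$ rounds of \bcongest, succeeding with probability at least $1 - n^{-2} - n^{-4}$. Since this is an $n^{O(1)}$-round \bcongest\ algorithm, \Cref{thm:main} applies and allows it to be simulated in the noisy beeping model at an $O(\Delta \log n)$-round overhead per communication round.

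Next, I would multiply the two round complexities: $O(\log n)$ \bcongest\ rounds at $O(\Delta \log n)$ noisy-beep rounds each yields $O(\Delta \log^2 n)$ rounds in the noisy beeping model, matching the claimed bound.

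Finally, I would handle the success probability via a union bound over the two sources of failure. By \Cref{thm:main}, the entire simulation of \Cref{alg:bcongestMM} (which is an $O(\log n) = n^{O(1)}$-round \bcongest\ algorithm) produces the same output as the \bcongest\ execution with probability at least $1 - n^{-2}$. Combined with the $1 - n^{-2} - n^{-4}$ success probability of \Cref{alg:bcongestMM} itself, a union bound gives an overall success probability of at least $1 - O(n^{-2})$, which is high probability in $n$. No step presents a real obstacle here, as all the machinery has been built in the preceding sections; the only minor point worth being explicit about is verifying that the noisy-beep simulation preserves the high-probability guarantee of the underlying randomized \bcongest\ algorithm, which it does because \Cref{thm:main} guarantees that, conditioned on the simulation succeeding, the simulated execution is identical to a genuine \bcongest\ execution.
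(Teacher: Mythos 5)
Your proposal is correct and follows exactly the paper's route: the paper's proof is simply an application of \Cref{thm:main} to \Cref{lem:bcongestmatching}, which is what you do, with the round-complexity multiplication and the union bound over the two failure sources spelled out explicitly. No issues.
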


\begin{proof}
Follows from applying Theorem \ref{thm:main} to Lemma \ref{lem:bcongestmatching} .
\end{proof}

This is close to optimal, since we show an $\Omega(\Delta \log n)$ bound even in the noiseless model:

\begin{theorem}\label{thm:matchinglower}
Maximal matching requires $\Omega(\Delta \log n)$ rounds in the (noiseless) beeping model, to succeed with any constant probability.
\end{theorem}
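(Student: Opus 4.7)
The plan is an information-theoretic counting argument on a symmetric bipartite instance. I would take the hard graph to be $K_{\Delta,\Delta}$ with parts $L,R$ together with $n-2\Delta$ isolated vertices; the IDs of $L$ are fixed to $\{1,\dots,\Delta\}$, while the IDs of $R$ are drawn as a uniformly random $\Delta$-subset of $\{\Delta+1,\dots,n\}$, assigned in a uniformly random order to the nodes of $R$, with remaining IDs handed to the isolated vertices. Since every maximal matching in $K_{\Delta,\Delta}$ is in fact perfect, on a correct run each $u_i\in L$ must output the ID of a distinct neighbor in $R$, so collectively $L$'s outputs form a permutation of $R$'s ID set. By Yao's principle it suffices to show that against this distribution over instances, no deterministic algorithm running for $T$ rounds succeeds with probability exceeding $2^T/\binom{n-\Delta}{\Delta}$.

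The key structural observation is that every node in $L$ has the identical neighborhood $R$, so the only information flowing from $R$ into $L$ is captured by the single sequence $B_R\in\{0,1\}^T$ of round-wise indicators of whether \emph{any} $R$-node beeps. I would prove by induction on the round that each $u_i\in L$'s entire transcript, and hence its final output, is a deterministic function $f_i(B_R)$ of $B_R$ alone (its ID $i$ being fixed). The inductive step uses the paper's convention that a beeping node always ``hears'' $\mathbf 1$: the bit $u_i$ observes in round $t$ equals $B_R(t)$ if $u_i$ listens and $\mathbf 1$ if $u_i$ beeps, in either case a function of $B_R(t)$ together with $u_i$'s previously-computed, and thus by induction $B_R$-determined, choice of whether to beep.

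Given this, as $B_R$ ranges over its $2^T$ possible values the ordered tuple $(f_1(B_R),\dots,f_\Delta(B_R))$ takes at most $2^T$ distinct values, and hence so does its underlying value \emph{set}. For the algorithm to be correct on a random $R$-ID assignment $\sigma$, the output's value set must equal the set of entries of $\sigma$; this accounts for only $\Delta!$ of the $\binom{n-\Delta}{\Delta}\cdot\Delta!$ equally likely orderings per reachable value set, so the overall success probability is at most $2^T/\binom{n-\Delta}{\Delta}$. Requiring this to exceed any fixed constant forces $T=\Omega\bigl(\log\binom{n-\Delta}{\Delta}\bigr)=\Omega(\Delta\log(n/\Delta))=\Omega(\Delta\log n)$ in the standard regime $\Delta\le n^{1-\Omega(1)}$. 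The main obstacle is the inductive step in the second paragraph: one must carefully rule out any hidden information channel from $R$ into $L$ beyond $B_R$, which turns on the ``own-beep counts as $\mathbf 1$'' convention together with the fact that $L$ is an independent set. Once this ``one bit per round from $R$ to $L$'' constraint is established, the remainder is a clean pigeonhole.
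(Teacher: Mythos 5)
Your argument is essentially the paper's: both proofs exploit that in $K_{\Delta,\Delta}$ the only information crossing the bipartition is one bit per round (did \emph{any} node on the other side beep?), so a $T$-round algorithm admits at most $2^T$ distinguishable transcripts and hence at most $2^T$ possible joint outputs on the receiving side, while correctness of a maximal matching (necessarily perfect here) forces that side to reproduce the other side's random ID set, of which there are exponentially many equally likely possibilities. Your use of Yao's principle together with the explicit induction showing each $u_i$'s output is a function $f_i(B_R)$ is a clean formalization of what the paper achieves by conditioning on the receiving side's local randomness; the induction itself is sound, since $L$ is an independent set and a beeping node's received bit is a constant.

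The one substantive difference is the ID distribution, and it costs you part of the statement. You draw $R$'s IDs as a uniform $\Delta$-subset of $\{\Delta+1,\dots,n\}$, so the number of correct answers is $\binom{n-\Delta}{\Delta}$ and your bound is $T=\Omega\bigl(\Delta\log(n/\Delta)\bigr)$; as you acknowledge, this matches the claimed $\Omega(\Delta\log n)$ only when $\Delta\le n^{1-\Omega(1)}$, and it degenerates entirely in the dense regime (for $n=2\Delta$ you get $\binom{n-\Delta}{\Delta}=1$ and no bound at all). The theorem carries no such restriction. The paper sidesteps this by drawing IDs from the polynomially larger space $[n^4]$ -- legitimate, since IDs are only required to be $O(\log n)$-bit -- so that the number of possible ID sets on the transmitting side is $\binom{n^4-\Delta}{\Delta}\ge n^{3\Delta}$, and the comparison of $2^T$ against $n^{3\Delta}$ yields $\Omega(\Delta\log n)$ for every $\Delta\le n/2$. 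Your proof is repaired by exactly this one-line change to the ID space; everything else goes through verbatim.
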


\begin{proof}
Our hard ensemble of instances is as follows: the underlying graph will be $K_{\Delta,\Delta}$, the complete bipartite graph with $\Delta$ vertices in each part. Each node's ID will be drawn independently at random from $[n^4]$.

Arbitrarily naming the two parts of the graph left and right, we consider the outputs of nodes on the right. For a correct output to maximal matching, each node on the right must uniquely output the ID of a node on the left, and so the union of outputs of the right part must be the list of IDs of the right part. The number of possible such lists (even assuming that IDs are all unique and the IDs of the right side are fixed) is $\binom{n^4-\Delta}{\Delta}\ge \binom{\frac 12 n^4}{\Delta} \ge \left( \frac{n^4}{2\Delta}\right)^\Delta\ge n^{3\Delta}$.

We note that each right node's output must be dependent only on its ID, its local randomness, and the transcript of communication performed by left nodes during the course of the algorithm. Since the graph is a complete bipartite graph, in each round there are only two discernable possibilities for communication from the perspective of right-part nodes: either at least one left node beeps, or none do. So, the transcript for an $r$-round algorithm can be represented as a sequence $\{B,S\}^r$, corresponding to hearing a beep or silence in each round. There are $2^r$ such transcripts.

Therefore, the union of output from right nodes depends solely on the randomness of right nodes, the IDs of left nodes, and the transcript. Of these, only the transcript can depend on left nodes' IDs. Each transcript therefore induces a distribution of right-part outputs (over the randomness of right-side IDs and local randomness.

There must be some set of left-part IDs such that under any transcript, the probability that the right-side nodes correctly output that set is at most $2^r / n^{3\Delta}$. So, if $r\le \Delta \log n$, then the probability that the right part produces a correct output on this instance is at most $n^\Delta / n^{3\Delta} =n^{-2\Delta} = o(1)$.
\end{proof}

\section{Conclusions}

We have presented an optimal method for simulating \bcongest and \congest in the noisy (and noiseless) beeping model. We have also presented, as an example, a maximal matching algorithm which requires $O(\log n)$ rounds in \bcongest, and which, using our simulation, can therefore be run in $O(\Delta \log^2 n)$ rounds in the noisy beeping model.

While our general simulation method is optimal, there is still room for improvement for many specific problems in the beeping model, and the complexity picture has significant differences from the better-understood message passing models. For example, in \congest, the problems of maximal matching and maximal independent set have similar $O(\log \Delta + \log^{O(1)}\log n)$ randomized round complexity upper bounds \cite{BEPS16,F20,G16, RG20}, whereas in the beeping model, maximal independent set can be solved in $\log^{O(1)} n$ rounds \cite{AABCHK13} while maximal matching requires $\Omega(\Delta \log n)$ (\Cref{thm:matchinglower}). In general, the question of which problems can be solved in $O(\log^{O(1)} n)$ rounds in the beeping model, and which require $poly(\Delta)$ factors, remains mostly open.

\bibliographystyle{plain}

\bibliography{beepsim}
\appendix

\end{document}